\let\proof\@undefined
\let\endproof\@undefined
\newtheorem{thm}{Theorem}[section]
\theoremstyle{definition}
\newtheorem{dfn}{Definition}[section]
\theoremstyle{remark}
\theoremstyle{plain}
\theoremstyle{remark}
\newtheorem{rmk}{Remark}[section]
\theoremstyle{plain}
\theoremstyle{plain}
\newtheorem{lem}{Lemma}[section]
\newcommand{\vect}[1]{\pmb{#1}}
\newcommand{\mat}[1]{\pmb{#1}}
\begin{document}
%
\title{Fast and Efficient Compressive Sensing using Structurally Random Matrices}
%
%
\author{Thong T. Do,~\IEEEmembership{Student Member,~IEEE,}
        Lu Gan,~\IEEEmembership{Member,~IEEE,} Nam H. Nguyen
        and Trac D. Tran,~\IEEEmembership{Senior Member,~IEEE}
\thanks{This work has been supported in part by the
National Science Foundation under Grant CCF-0728893.}
\thanks{Thong T. Do, Nam Nguyen and Trac D. Tran are with the Johns Hopkins University, Baltimore, MD, 21218 USA.}
\thanks{Lu Gan is with the Brunel University, London, UK.}}
%
%
%
\markboth{IEEE TRANSACTIONS ON SIGNAL PROCESSING,~Vol.~XXX,
No.~XXX,~XXX~2011}{Shell \MakeLowercase{\textit{et al.}}: Bare Demo
of IEEEtran.cls for Journals}



\maketitle

\begin{abstract}
This paper introduces a new framework of fast and efficient sensing
matrices for practical compressive sensing, called Structurally
Random Matrix (SRM). In the proposed framework, we pre-randomize a
sensing signal by scrambling its samples or flipping its sample
signs and then fast-transform the randomized samples and finally,
subsample the transform coefficients as the final sensing
measurements. SRM is highly relevant for large-scale, real-time
compressive sensing applications as it has fast computation and
supports block-based processing. In addition, we can show that SRM
has theoretical sensing performance comparable with that of
completely random sensing matrices. Numerical simulation results
verify the validity of the theory as well as illustrate the
promising potentials of the proposed sensing framework.
\end{abstract}

\begin{keywords}
compressed sensing, compressive sensing, random projection, sparse
reconstruction, fast and efficient algorithm
\end{keywords}

\section{Introduction}\label{intro_sec}

\label{sec:intro} \PARstart{C}{omp}ressed sensing (CS)
\cite{CandesRob06,DonohoCom06} has attracted a lot of interests over
the past few years as a revolutionary signal sampling paradigm.
Suppose that $\vect x$ is a length-$N$ signal. It is said to be
$K$-sparse (or compressible) if $\vect x$ can be well approximated
using only $K\ll N$ coefficients under some linear transform:
\begin{equation*}
\vect x=\mat \Psi \vect \alpha,
\end{equation*}
where $\mat \Psi$ is the sparsifying basis and $\vect \alpha$ is the
transform coefficient vector that has $K$ (significant) nonzero
entries.

According to the CS theory, such a signal can be acquired through
the following random linear projection:
\begin{equation*}\label{eq:CS-basic}
\vect y=\mat \Phi \vect x+ \vect e,
\end{equation*}
where $\vect y$ is the sampled vector with $M\ll N$ data points,
$\mat \Phi$ represents a $M\times N$ random matrix and $\vect e$ is
the acquisition noise. The CS framework is attractive as it implies
that $\vect x$ can be faithfully recovered from only
$M=\mathcal{O}(K\log N)$ measurements, suggesting the potential of
significant cost reduction in digital data acquisition.

While the sampling process is simply a random linear projection, the
reconstruction to find the sparsest signal from the received
measurements is highly non-linear process. More precisely, the
reconstruction algorithm is to solve the $l_1$-minimization of a
transform coefficient vector:
\begin{equation*}
\min \|\vect \alpha\|_1 \hspace{0.5cm}\text{s.t.}
\hspace{0.5cm}\vect y =\mat \Phi \mat \Psi \vect \alpha.
\end{equation*}

Linear programming \cite{CandesRob06, DonohoCom06} and other convex
optimization algorithms \cite{MarioGrad07, ElaineFix07,
EwoutProbing08} have been proposed to solve the $l_1$ minimization.
Furthermore, there also exists a family of greedy pursuit algorithms
\cite{TroppSig05, NeedellCosamp08, WeiSub08, DonohoSpa06,
DoSparity08} offering another promising option for sparse
reconstruction. These algorithms all need to compute $\mat \Phi\mat
\Psi$ and $(\mat \Phi\mat \Psi)^T$ multiple times. Thus,
computational complexity of the system depends on the structure of
sensing matrix $\mat \Phi$ and its transpose $\mat \Phi^T$.

Preferably, the sensing matrix $\mat \Phi$ should be highly
incoherent with sparsifying basis $\mat \Psi$, i.e. rows of $\mat
\Phi$ do not have any sparse representation in the basis $\mat
\Psi$. Incoherence between two matrices is mathematically quantified
by the mutual coherence coefficient \cite{DonohoUnc01}.
\begin{dfn}
The mutual coherence of an orthonormal matrix $N\times N$ $\mat
\Phi$ and another orthonormal matrix $N\times N$ $\mat \Psi$ is
defined as:
\begin{equation*}
\mu(\mat \Phi, \mat \Psi) = \max_{1\leq i,j\leq N} |\langle \vect
\Phi_i, \vect \Psi_j \rangle|
\end{equation*}
where $\vect \Phi_i$ are rows of $\mat \Phi$ and $\vect \Psi_j$ are
columns of $\mat \Psi$, respectively.
\end{dfn}

If $\mat \Phi$ and $\mat \Psi$ are two orthonormal matrices, $\|\mat
\Phi\vect \Psi_j\|_2 = \|\vect \Psi_j\|_2 = 1$. Thus, it is easy to
see that for two orthonormal matrices $\mat \Phi$ and $\mat \Psi$ ,
$1/\sqrt{N}\leq \mu \leq 1$. Incoherence implies that the mutual
coherence or the maximum magnitude of entries of the product matrix
$\mat \Phi \mat \Psi$ is relatively small. Two matrices are
completely incoherent if their mutual coherence coefficient
approaches the lower bound value of $1/\sqrt{N}$.

A popular family of sensing matrices is a random projection or a
random matrix of i.i.d random variables from a sub-Gaussian
distribution such as Gaussian or Bernoulli \cite{CandesNea06,
MedelsonUni06}. This family of sensing matrix is well-known as it is
universally incoherent with all other sparsifying basis. For
example, if $\mat \Phi$ is a random matrix of Gaussian i.i.d entries
and $\mat \Psi$ is an arbitrary orthonormal sparsifying basis, the
sensing matrix in the transform domain $\mat \Phi \mat \Psi$ is also
Gaussian i.i.d matrix. The universal property of a sensing matrix is
important because it enables us to sense a signal directly in its
original domain without significant loss of sensing efficiency and
without any other prior knowledge. In addition, it can be shown that
random projection approaches the optimal sensing performance of $M =
\mathcal{O}(K\log N)$.

However, it is quite costly to realize random matrices in practical
sensing applications as they require very high computational
complexity and huge memory buffering due to their completely
unstructured nature \cite{CandesSpa07}. For example, to process a
$512\times 512$ image with $64K$ measurements (i.e., $25\%$ of the
original sampling rate), a Bernoulli random matrix requires nearly
gigabytes storage and giga-flop operations, which makes both the
sampling and recovery processes very expensive and in many cases,
unrealistic.

Another class of sensing matrices is a uniformly random subset of
rows of an orthonormal matrix in which the partial Fourier matrix
(or the partial FFT) is a special case \cite{MedelsonUni06,
CandesSpa07}. While the partial FFT is well known for having fast
and efficient implementation, it only works well in the transform
domain or in the case that the sparsifying basis is the identity
matrix. More specifically, it is shown in [\cite{CandesSpa07},
\textit{Theorem} $1.1$] that the minimal number of measurements
required for exact recovery depends on the incoherence of $\mat
\Phi$ and $\mat \Psi$:
\begin{equation}\label{measnum}
M = \mathcal{O}(\mu_n^2 K \log N)
\end{equation}
where $\mu_n$ is the normalized mutual coherence: $\mu_n =
\sqrt{N}\mu$ and $1\leq \mu_n \leq \sqrt{N}$. With many well-known
sparsifying basis such as wavelets, this mutual coherence
coefficient might be large and thus, resulting in performance loss.
Another approach is to design a sensing matrix to be incoherent with
a given sparsifying basis. For example, \textit{Noiselets} is
designed to be incoherent with the Haar wavelet basis in
\cite{CofimanNoi05}, i.e. $\mu_n=1$ when $\mat \Phi$ is Noiselets
transform and $\mat \Psi$ is the Haar wavelet basis. Noiselets also
has low-complexity implementation $\mathcal{O}(N\log N)$ although it
is unknown if noiselets is also incoherent with other bases.

\section{Compressive Sensing with Structurally Random Matrices }\label{intro_srm_sec}
\subsection{Overview}

One of remaining challenges for CS in practice is to design a CS
framework that has the following features:
\begin{itemize}
\item \textit{Optimal or near optimal sensing performance}: the number of
measurements for exact recovery approaches the minimal bound, i.e.
on the order of $\mathcal{O}(K \log N)$;
\item \textit{Universality}: sensing performance is equally good with almost all sparsifying bases;
\item \textit{Low complexity, fast computation and block-based processing support}: these features of the sensing matrix are desired for large-scale, realtime sensing
applications;
\item \textit{Hardware/Optics implementation friendliness}: entries of the sensing matrix
only take values in the set $\{0,1,-1\}$.
\end{itemize}

In this paper, we propose a framework that aims to satisfy the above
wish-list, called \textit{Structurally Random Matrix}(SRM) that is
defined as a product of three matrices:
\begin{equation}
\mat \Phi = \sqrt{\frac{N}{M}}\mat D \mat F \mat R
\end{equation}
where:
\begin{itemize}
\item  $\mat R \in N\times N$ is either a uniform random permutation
matrix or a diagonal random matrix whose diagonal entries $R_{ii}$
are i.i.d Bernoulli random variables with identical distribution
$P(R_{ii}=\pm 1)=1/2$. A uniformly random permutation matrix
scrambles signal's sample locations globally while a diagonal matrix
of Bernoulli random variables flips signal's sample signs locally.
Hence, we often refer the former as the \textit{global randomizer}
and the latter as the \textit{local randomizer}.

\item $\mat F \in N\times N$ is an orthonormal matrix that,in practice, is selected to be fast computable such as popular fast transforms: FFT, DCT, WHT or their block
diagonal versions. The purpose of the matrix $\mat F$ is to spread
\textit{information} (or energy) of the signal's samples over all
measurements
\item  $\mat D \in M\times N$ is a subsampling matrix/operator. The operator $\mat D$ selects a random subset of rows of the matrix $\mat F \mat R$. If the probability of selecting a row $P(\text{a row is selected})$ is
$M/N$, the number of rows selected would be $M$ in average. In
matrix representation, $\mat D$ is simply a random subset of $M$
rows of the identity matrix of size $N\times N$. The scale
coefficient $\sqrt{\frac{N}{M}}$ is to normalize the transform so
that energy of the measurement vector is almost similar to that of
the input signal vector.
\end{itemize}

Equivalently, the proposed sensing algorithm SRM contains 3 steps:
\begin{itemize}
\item Step 1 (Pre-randomize): Randomize a target signal by either flipping
its sample signs or uniformly permuting its sample locations. This
step corresponds to multiplying the signal with the matrix $\mat R$
\item Step 2 (Transform): Apply a fast transform $\mat F$ to the randomized signal
\item Step 3 (Subsample): randomly pick up $M$ measurements out of N
transform coefficients. This step corresponds to multiplying the
transform coefficients with the matrix $\mat D$
\end{itemize}

Conventional CS reconstruction algorithm is employed to recover the
transform coefficient vector $\vect \alpha$ by solving the $l_1$
minimization:
\begin{equation}
\widehat{\vect \alpha}=\text{argmin} \|\vect \alpha\|_1
\hspace{0.4cm} s.t. \hspace{0.4cm}\vect y=\mat \Phi \mat \Psi \vect
\alpha.
\end{equation}

Finally, the signal is recovered as $\widehat{\vect x}=\mat \Psi
\widehat{\vect \alpha}$. The framework can achieve perfect
reconstruction if $\widehat{\vect x}=\vect x$.

From the best of our knowledge, the proposed sensing algorithm is
distinct from currently existing methods such as random projection
\cite{CandesDec05}, random filters \cite{TroppRan06}, structured
Toeplitz \cite{WaheedToeplitzCS07} and random convolution
\cite{JustinRandomconvol08} via the first step of pre-randomization.
Its main purpose is to scramble the structure of the signal,
converting the sensing signal into a white noise-like one to achieve
universally incoherent sensing.

Depending on specific applications, SRM can offer computational
benefits either at the sensing process or at the signal
reconstruction process. For applications that allow us to perform
sensing operation by computing the complete transform $\mat F$, we
can exploit the fast computation of the matrix $\mat F$ at the
sensing side. However, if it is required to precompute $\mat D\mat
F\mat R$ (and then store it in the memory for future sensing
operation), there would not be any computational benefit at the
sensing side. In this case, we can still exploit the structure of
SRM to speed up the signal recovery at the reconstruction side as in
most $l_1$-minimization algorithms \cite{MarioGrad07}, majority of
computational complexity is spent to compute matrix-vector
multiplications $\mat A \vect u$ and $\mat A^T\vect u$, where $\mat
A = \mat \Phi\mat \Psi$. Note that both $\mat A$ and $\mat A^T$ are
fast computable if the sparsifying matrix $\mat \Psi$ is fast
computable, i.e. their computational complexity on the order of
$\mathcal{O}(N\log N)$. In addition, when $\mat F$ is selected to be
the Walsh-Hadamard matrix, the SRM entries only take values in the
set $\{-1,1\}$, which is friendly for hardware/optics
implementation.

The remaining of the paper is organized as follows. We first discuss
about incoherence between SRMs and sparsifying transforms in
Section~\ref{incoherence_analysis_sec}. More specifically,
Section~\ref{asymp_distribution_analysis} will give us a rough
intuition of why SRM has sensing performance comparable with
Gaussian random matrices. Detail quantitative analysis of the
incoherence for SRMs with the local randomizer and the global
randomizer is presented in Section~\ref{incoherence_analysis}. Based
on these incoherence results, theoretical performance of the
proposed framework  is analyzed in Section~\ref{CS_analysis_sec} and
then followed by experiment validation in
Section~\ref{numerical_experiment}. Finally,
Section~\ref{discussion_conclusion} concludes the paper with detail
discussion of practical advantages of the proposed framework and
relationship between the proposed framework and other related works.

\subsection{Notations}\label{notations}
We reserve a bold letter for a vector, a capital and bold letter for
a matrix, a capital and bold letter with one sub-index for a row or
a column of a matrix and a capital letter with two sub-indices for
an entry of a matrix. We often employ $\vect x \in \mathbb{R}^N$ for
the input signal, $\vect y \in \mathbb{R}^M$ for the measurement
vector, $\mat \Phi \in \mathbb{R}^{M\times N}$ for the sensing
matrix, $\mat \Psi \in \mathbb{R}^{N\times N}$ for the sparsifying
matrix and $\vect \alpha \in \mathbb{R}^N$ for the transform
coefficient vector ($\vect x = \mat \Psi \vect \alpha$). We use the
notation $\text{supp}(\vect z)$ to indicate the index set (or
coordinate set) of nonzero entries of the vector $\vect z$.
Occasionally, we also use $\mathcal{T}$ to alternatively refer to
this index set of nonzero entries (i.e., $\mathcal{T}$=supp($\vect
z$)). In this case, $\vect z_\mathcal{T}$ denotes the portion of
vector $\vect z$ indexed by the set $\mathcal{T}$ and $\mat
\Psi_\mathcal{T}$ denotes the submatrix of $\mat \Psi$ whose columns
are indexed by the set $\mathcal{T}$.

Let $\mat A = \mat F\mat R$ and $S_{ij}$, $F_{ij}$ be the entry at
the $i^{th}$ row and the $j^{th}$ column of $\mat A\mat \Psi$ and
$\mat F$, $R_{kk}$ be the $k^{th}$ entry on the diagonal of the
diagonal matrix $\mat R$, $\vect A_i$ and $\vect \Psi_j$ be the
$i^{th}$ row of $\mat A$ and $j^{th}$ column of $\mat \Psi$,
respectively.

In addition, we also employ the following notations:
\begin{itemize}
\item $x_n$ is on the order of $o(z_n)$, denoted as $x_n = o(z_n)$, if
\begin{equation*}
\lim_{n\rightarrow \infty}\frac{x_n}{z_n}= 0.
\end{equation*}
\item $x_n$ is on the order of $\mathcal{O}(z_n)$, denoted as $x_n = \mathcal{O}(z_n)$, if
\begin{equation*}
\lim_{n\rightarrow \infty}\frac{x_n}{z_n}= c.
\end{equation*}
where $c$ is some positive constant.
\item A random variable $X_n$ is called asymptotically normally
distributed $\mathcal{N}(0,\sigma^2)$, if
\begin{equation*}
\lim_{n\rightarrow \infty} P(\frac{X_n}{\sigma}\leq
x)=\frac{1}{\sqrt{2\pi}}\int_{-\infty}^x e^{\frac{-y^2}{2}} dy.
\end{equation*}
\end{itemize}

\section{Incoherence Analysis}\label{incoherence_analysis_sec}
\subsection{Asymptotical Distribution Analysis}\label{asymp_distribution_analysis}

If $\mat \Phi$ is an i.i.d Gaussian matrix
$\mathcal{N}(0,\frac{1}{N})$ and $\mat \Psi$ is an arbitrarily
orthonormal matrix, $\mat \Phi\mat \Psi$ is also i.i.d Gaussian
matrix $\mathcal{N}(0,\frac{1}{N})$, implying that with overwhelming
probability, a Gaussian matrix is highly incoherent with all
orthonormal $\mat \Psi$. In other words, the i.i.d. Gaussian matrix
is universally incoherent with fixed transforms (with overwhelming
probability). In this section, we will argue that under some mild
conditions, with $\mat \Phi = \mat D \mat F \mat R$, where $\mat D,
\mat F, \mat R$ are defined as in the previous section, entries of
$\mat \Phi\mat \Psi$ are asymptotically normally distributed
$\mathcal{N}(0,\sigma^2)$, where $\sigma^2 \leq
\mathcal{O}(\frac{1}{N})$. This claim is illustrated in
Fig.~\ref{normaldis}, which depicts the quantile-quantile (QQ) plots
of entries of $\mat \Phi\mat \Psi$, where $N=256$, $\mat F$ is the
$256\times 256$ DCT matrix and $\mat \Psi$ is the Daubechies-8
orthogonal wavelet basis. Fig.~\ref{normaldis}(a) and
Fig.~\ref{normaldis}(b) correspond to the case $\mat R$ is the local
and global randomizer, respectively. In both cases, the QQ-plots
appear straight, as the Gaussian model demands.

\begin{figure}[htb]
\centering%
\includegraphics[width=8.5cm]{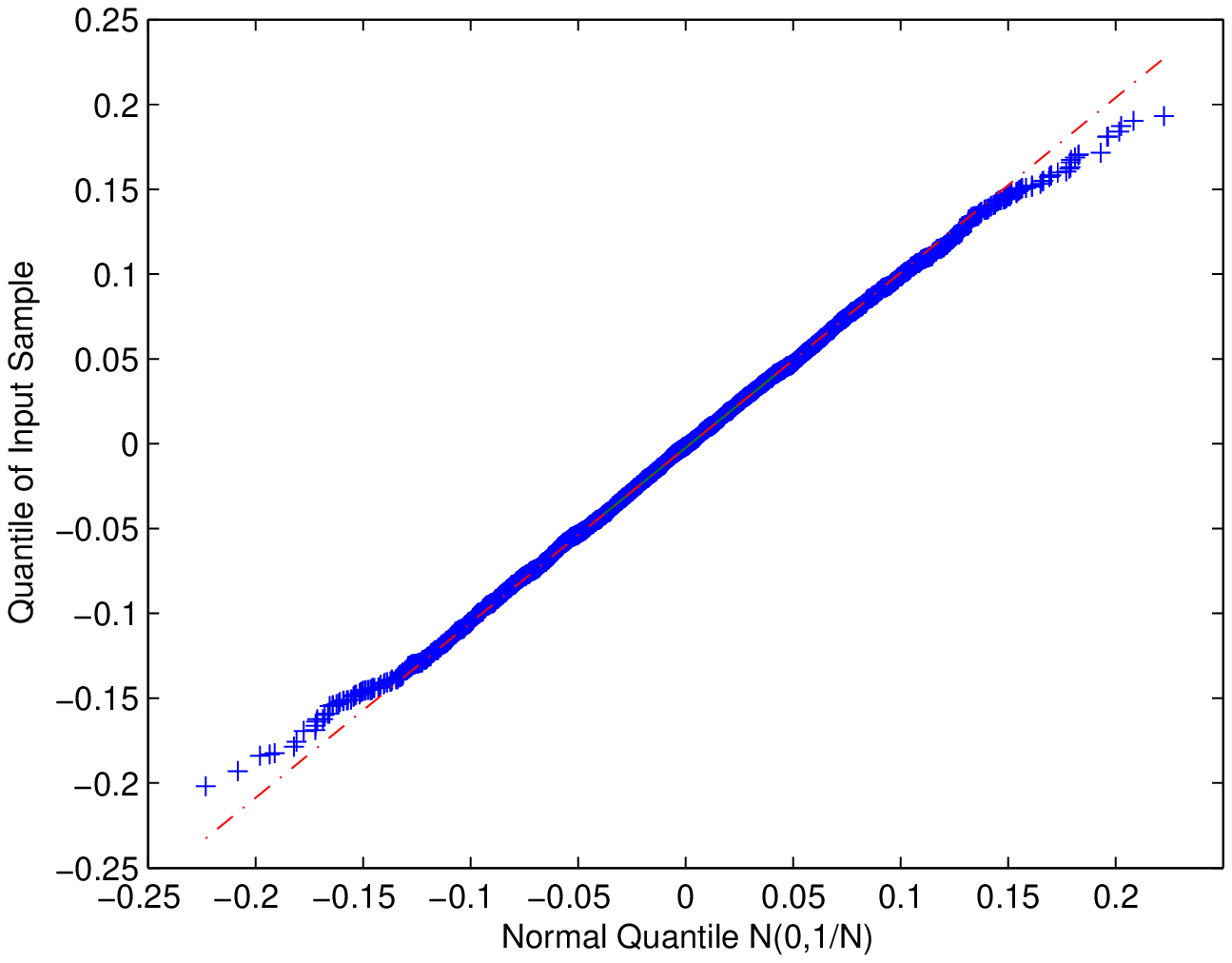}
\centerline{(a)}
\includegraphics[width=8.5cm]{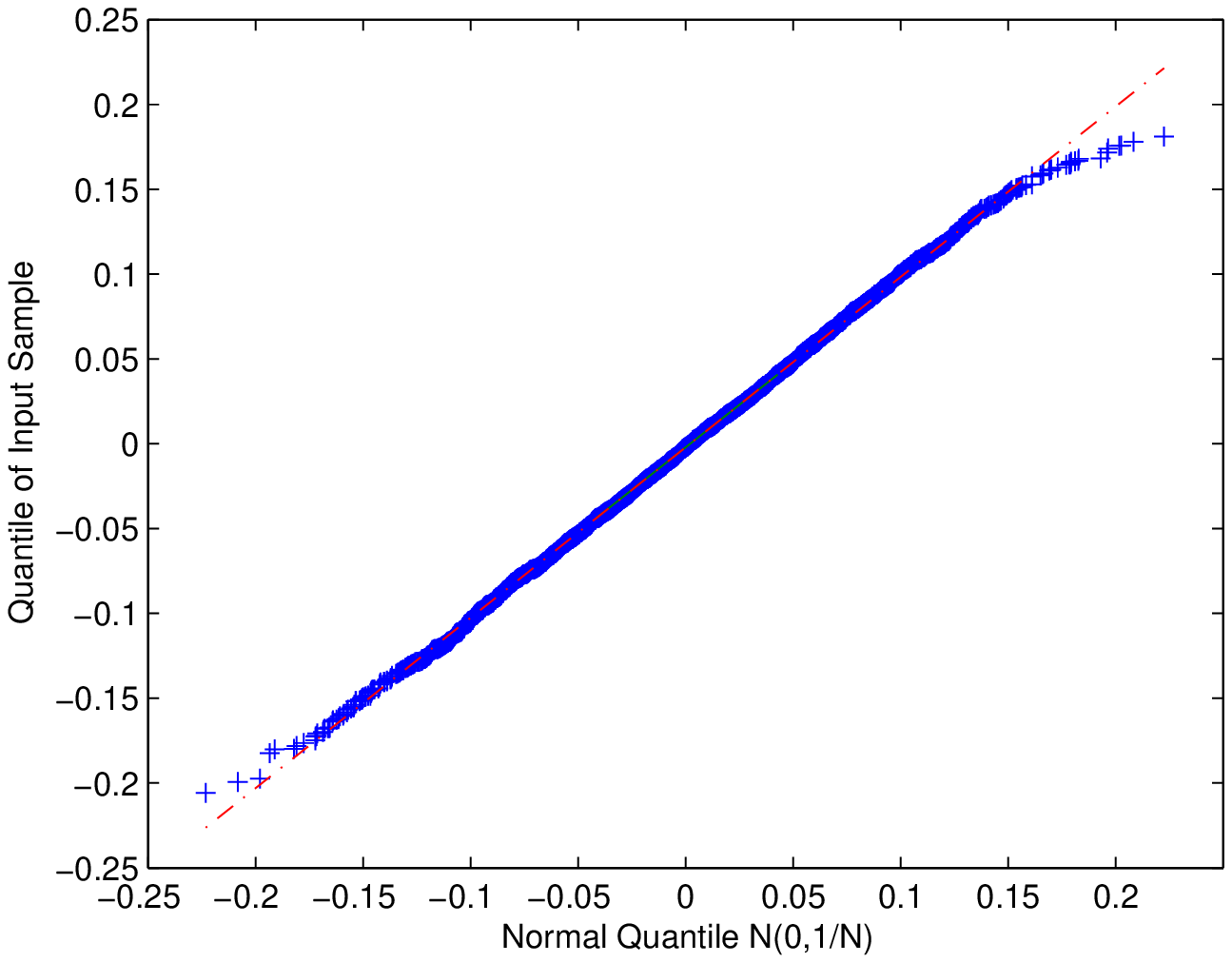}
\centerline{(b)} \caption{QQ plots comparing distribution of entries
of $\mat \Phi\mat \Psi$ and Gaussian distribution. (a) $\mat R$ is
the local randomizer. (b) $\mat R$ is the global randomizer. The
plots all appear nearly linear, indicating that entries of $\mat
\Phi\mat \Psi$ are nearly Normal distributed}\label{normaldis}
\end{figure}

Note that $\mat \Phi$ is a submatrix of $\mat A= \mat F \mat R$.
Thus, asymptotical distribution of the entries of $\mat A \mat \Psi$
is similar to that of entries of $\mat \Phi\mat \Psi$.

Before presenting the asymptotical theoretical analysis, we
introduce the following assumptions for the local and global
randomization models.

\subsubsection{Assumptions for the Local Randomization Model}
\begin{itemize}
\item $\mat F$ is an $N\times N$ unit-norm row matrix with absolute magnitude of all entries on the
order of $\mathcal{O}(\frac{1}{\sqrt{N}})$.
\item $\mat \Psi$ is an $N\times N$ unit-norm column matrix with the maximal absolute magnitude of
entries on the order of $o(1)$.
\end{itemize}
\subsubsection{Assumptions for the Global Randomization Model} The
global randomization model requires similar assumptions for the
local randomization model plus the following extra assumptions
\begin{itemize}
\item The average sum of entries on each column of $\mat \Psi$ is on the order of
$o(\frac{1}{\sqrt{N}})$.
\item Sum of entries on each row of $\mat F$ is zero.
\item Entries on each row of $\mat F$ and on each column of
$\mat \Psi$ are not all equal.
\end{itemize}

\begin{thm}\label{thm1}
Let $\mat A= \mat F \mat R$, where $\mat R$ is the local randomizer.
Given the assumptions for the local randomization model, entries of
$\mat A\mat \Psi$ are asymptotically normally distributed
$\mathcal{N}(0,\sigma^2)$ with $\sigma^2 \leq
\mathcal{O}(\frac{1}{N})$.
\end{thm}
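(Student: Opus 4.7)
The plan is to expand the $(i,j)$ entry of $\mat A\mat \Psi$ as a sum of independent, zero-mean random variables over the diagonal entries of $\mat R$ and then apply a triangular-array central limit theorem. Explicitly, writing $S_{ij}=(\mat F\mat R\mat\Psi)_{ij}=\sum_{k=1}^{N}F_{ik}\Psi_{kj}R_{kk}$, I set $X_k^{(N)}=F_{ik}\Psi_{kj}R_{kk}$. Since $R_{kk}$ is Bernoulli $\pm 1$, each $X_k^{(N)}$ is independent with $\E[X_k^{(N)}]=0$ and $\Var(X_k^{(N)})=F_{ik}^2\Psi_{kj}^2$. Summing and using the unit-norm-column assumption on $\mat\Psi$ together with $|F_{ik}|=\mathcal{O}(1/\sqrt{N})$, the total variance is
\begin{equation*}
\sigma^2=\sum_{k=1}^{N}F_{ik}^2\Psi_{kj}^2\leq \bigl(\max_k F_{ik}^2\bigr)\sum_{k=1}^{N}\Psi_{kj}^2=\mathcal{O}(1/N),
\end{equation*}
which is the bound claimed in the theorem.

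Next I would establish asymptotic normality via the Lyapunov CLT with third absolute moments. One has $\E|X_k^{(N)}|^3=|F_{ik}|^3|\Psi_{kj}|^3$, so
\begin{equation*}
\sum_{k=1}^{N}\E|X_k^{(N)}|^3\leq \bigl(\max_k|F_{ik}|\bigr)\bigl(\max_k|\Psi_{kj}|\bigr)\sum_{k=1}^{N}F_{ik}^2\Psi_{kj}^2=\mathcal{O}(N^{-1/2})\cdot o(1)\cdot \sigma^2.
\end{equation*}
Dividing by $\sigma^3$ (and using that the non-degenerate contribution to $\sigma^2$ is of exact order $1/N$ under the stated assumptions, so $\sigma$ is of order $N^{-1/2}$) the Lyapunov ratio is $o(1)\cdot N^{-1/2}/N^{-1/2}=o(1)$, and the Lyapunov condition is satisfied. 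Thus $S_{ij}/\sigma$ converges in distribution to $\mathcal{N}(0,1)$, which is exactly the asymptotic normality asserted.

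The trickiest point in this argument is handling the case where $\sigma^2$ could be much smaller than $1/N$ (for instance, if $\mat\Psi_j$ happened to be supported on columns of $\mat F$ whose entries were extremely small). Under the stated assumptions both $\mat F$ and $\mat \Psi$ are well-spread — $|F_{ik}|=\Theta(1/\sqrt{N})$ and $\max_k|\Psi_{kj}|=o(1)$ with $\sum_k\Psi_{kj}^2=1$ — so the typical entry $F_{ik}^2\Psi_{kj}^2$ is $\Theta(1/N)\cdot \Psi_{kj}^2$, making $\sigma^2$ of exact order $1/N$ and the Lyapunov check clean. Alternatively, one could use the Lindeberg form directly, noting that $|X_k^{(N)}|\leq \mathcal{O}(1/\sqrt{N})\cdot o(1)=o(\sigma)$ uniformly in $k$, so for any $\epsilon>0$ the truncated sum $\sum_k\E[(X_k^{(N)})^2\mathbf{1}_{\{|X_k^{(N)}|>\epsilon\sigma\}}]$ vanishes for $N$ large enough, again yielding the CLT. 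Either route produces the stated conclusion with the variance bound $\sigma^2\leq\mathcal{O}(1/N)$.
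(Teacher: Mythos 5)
Your proof is correct and follows essentially the same route as the paper: the same expansion $S_{ij}=\sum_{k}F_{ik}\Psi_{kj}R_{kk}$ into independent, bounded, zero-mean terms, the same variance bound $\sigma^2=\Theta(1/N)$ from $|F_{ik}|=\Theta(1/\sqrt{N})$ and $\|\vect \Psi_j\|_2=1$, and a triangular-array central limit theorem. The only difference is which form of the CLT condition you verify --- you check Lyapunov/Lindeberg via third moments, whereas the paper verifies the equivalent no-dominant-variance condition $\text{Var}(Z_k)<\epsilon\sigma^2$ by contradiction --- and both reduce to exactly the same facts about $\mat F$ and $\mat \Psi$.
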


\begin{proof}

With notations being defined in Section~\ref{notations}, we have:
\begin{equation}
S_{ij} = \langle \vect A_i, \vect \Psi_j\rangle = \sum_{k=1}^N
F_{ik} \Psi_{kj} R_{kk}
\end{equation}
Denote $Z_k = F_{ik} \Psi_{kj} R_{kk}$. Because $R_{kk}$ are i.i.d
Bernoulli random variables, $Z_k$ are i.i.d zero-mean random
variables with $E(Z_k)= 0$. The assumption that $|F_{ik}|$ are on
the order of $\mathcal{O}(\frac{1}{\sqrt{N}})$ implies that there
exist two positive constants $c_1$ and $c_2$ such that:

\begin{equation}\label{inequa1}
\frac{c_1}{N}\Psi_{kj}^2 \leq \text{Var}(Z_k) = F_{ik}^2\Psi_{kj}^2
\leq \frac{c_2}{N}\Psi_{kj}^2.
\end{equation}

\noindent The variance of $S_{ij}$, $\sigma^2$, can be bounded as
the follows:
\begin{equation}\label{inequa2}
\frac{c_1}{N} = \frac{c_1}{N}\sum_{k=1}^N \Psi_{kj}^2 \leq \sigma^2
= \sum_{k=1}^N \text{Var}(Z_k) \leq \frac{c_2}{N}\sum_{k=1}^N
\Psi_{kj}^2 = \frac{c_2}{N}.
\end{equation}

Because $S_{ij}$ is a sum of i.i.d zero-mean random variables
$\{Z_k\}_{k=1}^N$, according to the Central Limit Theorem (CLT)(see
Appendix~\ref{appdx1}), $S_{ij}\rightarrow
\mathcal{N}(0,\mathcal{O}(\frac{1}{N}))$. To apply CLT, we need to
verify its convergence condition: for a given $\epsilon >0$ and
there exists $N$ that is sufficiently large such that the
$\text{Var}(Z_k)$ satisfy:
\begin{equation}
\text{Var}(Z_k)<\epsilon \sigma^2, k=1,2,...,N.
\end{equation}

To show that this convergence condition is met, we use the
counterproof method. Assume there exists $\epsilon_0$ such that
$\forall N$, there exists at least $k_0\in \{1,2,\dots,N\}$:

\begin{equation}\label{inequa3}
\text{Var}(Z_{k_0})>\epsilon_0 \sigma^2.
\end{equation}

From (\ref{inequa1}), (\ref{inequa2}) and (\ref{inequa3}), we
achieve:
\begin{equation}
\epsilon_0 \frac{c_1}{N} \leq \text{Var}(Z_{k_0})\leq
\frac{c_2}{N}\Psi_{k_0j}^2.
\end{equation}

This inequality can not be true if $\Psi_{k_0j}$ is on the order of
$o(1)$. The underlying intuition of the convergence condition is to
guarantee that there is no random variable with dominant variance in
the sum $S_{ij}$. In this case, it simply requires that there is no
dominant entry on each column of $\mat \Psi$.
\end{proof}

Similarly, we can obtain a similar result when $\mat R$ is a
uniformly random permutation matrix.

\begin{thm}\label{thm2}
Let $\mat A= \mat F \mat R$, where $\mat R$ is the global
randomizer. Given the assumptions for the global randomization
model, entries of $\mat A\mat \Psi$ are asymptotically normally
distributed $\mathcal{N}(0,\sigma^2)$, where $\sigma^2\leq
\mathcal{O}(\frac{1}{N})$.
\end{thm}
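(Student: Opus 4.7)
\emph{Proof proposal.} The plan is to parallel Theorem~\ref{thm1}, but to replace the i.i.d.\ Lindeberg CLT with a combinatorial CLT, since a uniform random permutation produces dependent summands (sampling without replacement rather than independent sign flips). Writing $\mat R$ as the permutation matrix of a uniform random permutation $\pi$, I would start from
\begin{equation*}
S_{ij} = \sum_{k=1}^N F_{ik}\,\Psi_{\pi^{-1}(k),\,j} = \sum_{k=1}^N a_k\,b_{\pi^{-1}(k)},
\end{equation*}
with $a_k = F_{ik}$ and $b_k = \Psi_{kj}$. This places $S_{ij}$ exactly in the form of Hoeffding's combinatorial statistic.

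Next I would compute the first two moments. By symmetry, $E[S_{ij}] = \bigl(\sum_k a_k\bigr)\bigl(\tfrac{1}{N}\sum_k b_k\bigr)$, which vanishes because rows of $\mat F$ sum to zero, explaining that assumption. Hoeffding's combinatorial variance identity gives
\begin{equation*}
\sigma^2 = \mathrm{Var}(S_{ij}) = \frac{1}{N-1}\Bigl(\sum_k a_k^2 - N\bar a^2\Bigr)\Bigl(\sum_k b_k^2 - N\bar b^2\Bigr) = \frac{1}{N-1}\bigl(1 - N\bar\Psi_j^2\bigr),
\end{equation*}
using that rows of $\mat F$ and columns of $\mat \Psi$ are unit norm and that $\bar a = 0$. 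The assumption that the column means of $\mat \Psi$ are $o(1/\sqrt N)$ forces $N\bar\Psi_j^2 = o(1)$, so $\sigma^2 \leq \mathcal{O}(1/N)$, matching the variance bound claimed.

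For the distributional limit I would invoke Hoeffding's combinatorial CLT (equivalently, the Motoo/Wald--Wolfowitz form): $S_{ij}/\sigma \to \mathcal{N}(0,1)$ provided
\begin{equation*}
\frac{\max_k (a_k - \bar a)^2}{\sum_k(a_k - \bar a)^2} \to 0 \quad\text{and}\quad \frac{\max_k (b_k - \bar b)^2}{\sum_k(b_k - \bar b)^2} \to 0.
\end{equation*}
Under $|F_{ik}| = \mathcal{O}(1/\sqrt N)$ the first ratio is $\mathcal{O}(1/N)$, and under $\max_k |\Psi_{kj}| = o(1)$ the second is $o(1)$. The ``not all equal'' assumptions on each row of $\mat F$ and each column of $\mat \Psi$ keep the denominators strictly positive, so the normalization is well defined and the limit is genuinely Gaussian rather than degenerate.

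The main obstacle — the substantive difference from Theorem~\ref{thm1} — is precisely the dependence introduced by the random permutation: one cannot simply verify a Lindeberg condition term by term, because the $Z_k = F_{ik}\Psi_{\pi^{-1}(k),j}$ are not independent. The work is therefore in supplying an appropriate combinatorial CLT (Hoeffding~1951, or alternatively Stein's method via exchangeable pairs, or a coupling between sampling with and without replacement), after which verifying its hypotheses is a direct rewrite of the conditions already imposed on $\mat F$ and $\mat \Psi$ in the global randomization model.
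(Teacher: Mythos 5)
Your proposal is correct and takes essentially the same route as the paper: both identify $S_{ij}$ as a Hoeffding combinatorial statistic, compute its mean and variance from the zero row-sum and unit-norm assumptions (your closed form $\frac{1}{N-1}(1-N\overline{\Psi_j}^2)$ agrees exactly with the paper's term-by-term covariance computation), and conclude via the Combinatorial Central Limit Theorem. The only caveat is that the convergence condition to check is Hoeffding's product form $N\cdot\frac{\max_k(a_k-\overline{a})^2}{\sum_k(a_k-\overline{a})^2}\cdot\frac{\max_k(b_k-\overline{b})^2}{\sum_k(b_k-\overline{b})^2}\to 0$ rather than the two ratios vanishing separately (which is not sufficient in general); your bounds of $\mathcal{O}(1/N)$ for the first ratio and $o(1)$ for the second verify the product form anyway, so the argument goes through.
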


\begin{proof}
Let $[\omega_1, \omega_2,...,\omega_N]$ be a uniform random
permutation of $[1,2,...,N]$. Note that $\{\omega_k\}_{k=1}^N$ can
be viewed as a sequence of random variables with identical
distribution. In particular, for a fixed $k$:
\begin{equation*}
P(\omega_k = i)=\frac{1}{N}, i=1,2,...,N.
\end{equation*}
Denote $Z_k = F_{i\omega_k}\Psi_{kj}$ (we omit the dependence of
$Z_k$ on $i$ and $j$ to simplify the notation), we have:
\begin{equation*}
S_{ij} = \langle \vect A_i, \vect \Psi_j\rangle = \sum_{k=1}^N
F_{i\omega_k} \Psi_{kj} = \sum_{k=1}^N Z_k.
\end{equation*}
Using the assumption that the vector $\vect F_i$ has zero average
sum and unit norm, we derive:
\begin{equation*}
E(Z_k) = \Psi_{kj}E(F_{i\omega_k}) = \frac{\Psi_{kj}}{N}\sum_{j=1}^N
F_{ij} = 0.
\end{equation*}
and also,
\begin{equation*}
E(Z_k^2) =\Psi_{kj}^2
E(F_{i\omega_k}^2)=\frac{\Psi_{kj}^2}{N}\sum_{j=1}^N F_{ij}^2 =
\frac{\Psi_{kj}^2}{N}.
\end{equation*}

In addition, note that although $\{\omega_k\}_{k=1}^N$ have the
identical distribution, they are correlated random variables because
of the uniformly random permutation \textit{without} replacement.
Thus, with a pair of $k$ and $l$ such that $1\leq k\neq l\leq N$, we
have:
\begin{equation*}
\begin{split}
E(Z_kZ_l) & =\Psi_{kj} \Psi_{lj} E(F_{i\omega_k}F_{i\omega_l})\\
          & =\frac{\Psi_{kj} \Psi_{lj}}{N(N-1)}\sum_{1\leq p\neq q\leq N} F_{ip}F_{iq}\\
          & =\frac{\Psi_{kj} \Psi_{lj}}{N(N-1)}((\sum_{p=1}^N F_{ip})^2 - \sum_{p=1}^N F_{ip}^2)\\
          & =-\frac{\Psi_{kj} \Psi_{lj}}{N(N-1)}.
\end{split}
\end{equation*}

The last equation holds because the vector $\vect F_i$ has zero
average sum and unit-norm. Then, we derive the expectation and the
variance of $S_{ij}$ as follows:
\begin{equation*}
E(S_{ij})=0;
\end{equation*}
\begin{equation*}
\begin{split}
\text{Var}(S_{ij})& = \sum_{k=1}^N E(Z_k^2)+ \sum_{1\leq k\neq q\leq N}E(Z_k Z_l)\\
           & = \frac{1}{N}\sum_{k=1}^N \Psi_{kj}^2 - \frac{1}{N(N-1)}\sum_{1\leq k\neq l\leq N}\Psi_{kj}\Psi_{lj}\\
           & = \frac{1}{N} - \frac{1}{N(N-1)}((\sum_{k=1}^N \Psi_{kj})^2 - \sum_{k=1}^N \Psi_{kj}^2)\\
           & =\frac{1}{N} - \frac{1}{N(N-1)}((\sum_{k=1}^N \Psi_{kj})^2 -1)\\
           & \leq \frac{1}{N} + \frac{1}{N(N-1)} =
           \mathcal{O}(\frac{1}{N}).
\end{split}
\end{equation*}

The forth equations holds because the column $\vect \Psi_j$ has
unit-norm. The theorem is then a simple corollary of the
Combinatorial Central Limit Theorem \cite{HoeffdingCom51} (see
Appendix $1$), provided that its convergence condition can be
verified that is:
\begin{equation}\label{CLTconvergence}
\lim_{N\rightarrow \infty} N\frac{\max_{1\leq k\leq
N}(F_{ik}-\overline{F_i})^2}{\sum_{k=1}^N
(F_{ik}-\overline{F_i})^2}\frac{\max_{1\leq k\leq N}
(\Psi_{kj}-\overline{\Psi_j})^2}{\sum_{k=1}^N
(\Psi_{kj}-\overline{\Psi_j})^2} = 0,
\end{equation}
where
\begin{equation*}
\overline{F_i} = \frac{1}{N}\sum_{k=1}^N F_{ik};\hspace{0.4cm}
\overline{\Psi_j} = \frac{1}{N}\sum_{k=1}^N \Psi_{kj}.
\end{equation*}

Because $\overline{F_i}=0$, $\|F_i\|_2^2=1$ and $\max_{1\leq k\leq
N}F_{ik}^2=\mathcal{O}(\frac{1}{N})$, the equation
(\ref{CLTconvergence}) holds if the following equation holds:
\begin{equation}\label{newcond1}
\lim_{N\rightarrow \infty}\frac{\max_{1\leq k\leq N}
(\Psi_{jk}-\overline{\Psi_j})^2}{\sum_{k=1}^N
(\Psi_{jk}-\overline{\Psi_j})^2} = 0.
\end{equation}
Because $\{|\overline{\Psi_j}|\}_{j=1}^N$ are on the order of
$o(\frac{1}{\sqrt{N}})$:
\begin{equation}\label{newcond2}
\sum_{k=1}^N (\Psi_{kj}-\overline{\Psi_j})^2 = \|\Psi_j\|_2^2
-N\overline{\Psi_j}^2 = 1 -N\overline{\Psi_j}^2  = \mathcal{O}(1).
\end{equation}
Also, due to $|\overline{\Psi_j}|\leq \max_{1\leq k\leq
N}|\Psi_{kj}|$ and $|\Psi_{kj}|$ are on the order of $o(1)$:
\begin{equation}\label{newcond3}
\max_{1\leq k\leq N} (\Psi_{kj}-\overline{\Psi_j})^2 \leq
4\max_{1\leq k\leq N}\Psi_{kj}^2 = o(1).
\end{equation}

Combination of (\ref{newcond2}) and (\ref{newcond3}) implies
(\ref{newcond1}) and thus the convergence condition of the
Combinatorial Central Limit Theorem is verified.
\end{proof}

The condition that each row of $\mat F$ has zero average sum is to
guarantee that entries of $\mat F\mat \Psi$ have zero mean while the
condition that entries on each row of $\mat F$ and on each column of
$\mat \Psi$ are not all equal is to prevent the degenerate case that
entries of $\mat F\mat \Psi$ might become a deterministic quantity.
For example, when entries of a row $\vect F_i$ are all equal
$\frac{1}{\sqrt{N}}$, $S_{ij} = \frac{1}{\sqrt{N}}\sum_{k=1}^N
\Psi_{kj}$, which is a deterministic quantity, not a random
variable. Note that these conditions are not needed when $\mat R$ is
the local randomizer.

If $\mat F$ is a DCT matrix, a (normalized) WHT matrix or a
(normalized) DFT matrix, all the rows (except for the first one)
have zero average sum due to the symmetry in these matrices. The
first row, whose entries are all equal $\frac{1}{\sqrt{N}}$, can be
considered as the averaging row, or a lowpass filtering operation.
When the input signal is zero-mean, this row might be chosen or not
without affecting quality of the reconstructed signal. Otherwise, it
should be included in the chosen row set to encode the signal's
mean. Lastly, the condition that absolute average sum of every
column of the sparsifying basis $\mat \Psi$ are on the order of
$o(\frac{1}{\sqrt{N}})$ is also close to the reality because the
majority of columns of the sparsifying basis $\mat \Psi$ can be
roughly viewed as bandpass and highpass filters whose average sum of
the coefficients are always zero. For example, if $\mat \Psi$ is a
wavelet basis (with at least one vanishing moment), then all columns
of $\mat \Psi$ (except one at DC) has column sum of zero.

The aforementioned theorems show that under certain conditions, the
majority of entries of $\mat A\mat \Psi$ (also $\mat \Phi\mat \Psi$)
behave like Gaussian random variables $\mathcal{N}(0,\sigma^2)$,
where $\sigma^2\leq \mathcal{O}(\frac{1}{N}$). Roughly speaking,
this behavior constitutes to a good sensing performance for the
proposed framework. However, these asymptotic results are not
sufficient for establishing sensing performance analysis because in
general, entries of $\mat A\mat \Psi$ are not stochastically
independent, violating a condition of a sensing Gaussian i.i.d
matrix. In fact, the sensing performance might be quantitatively
analyzed by employing a powerful analysis framework of a random
subset of rows of an orthonormal matrix \cite{CandesSpa07}. Note
that $\mat A$ is also an orthonormal matrix when $\mat R$ is the
local or the global randomizer.

Based on the Gaussian tail probability and a union bound for the
maximum absolute value of a random sequence, the maximum absolute
magnitude of $\mat A\mat \Psi$ can be asymptotically bounded as
follows:
\begin{equation*}
P(\max_{1\leq i,j\leq N}|S_{ij}|\geq t)\preceq 2N^2 \exp
(-\frac{t^2}{2\sigma^2})
\end{equation*}
where $\sigma^2\leq \frac{c}{N}$ and $c$ is some positive constant
and $\preceq$ stands for "asymptotically smaller or equal", i.e.,
when $N$ goes to infinity, $\preceq$ becomes $\leq$.

If we choose $t = \sqrt{\frac{2c\log (2N^2/\delta)}{N}}$, the above
inequality is equivalent to:
\begin{equation*}
P(\max_{1\leq i,j\leq N}|S_{ij}|\leq \sqrt{\frac{c\log
2(N/\delta)^2}{N}}) \succeq 1-\delta
\end{equation*}
which implies that with probability at least $1-\delta$, the mutual
coherence of $\mat A$ and $\mat \Psi$ is upper bounded by
$\mathcal{O}(\sqrt{\frac{\log (N/\delta)}{N}})$, which is close to
the optimal bound, except the $\log N$ factor.\\

In the following section, we will employ a more powerful tool from
the theory of concentration inequalities to analyze the coherence
between $\mat A = \mat F\mat R$ and $\mat \Psi$ when $N$ is finite.
We also consider a more general case that $\mat F$ is a sparse
matrix (e.g. a block-diagonal matrix).

\subsection{Incoherence Analysis}\label{incoherence_analysis}
Before presenting theoretical results for incoherence analysis, we
introduce assumptions for block-based local and global randomization
models.
\subsubsection{Assumptions for the Block-based Local Randomization Model}
\begin{itemize}
\item  $\mat F$ is an $N\times N$ unit-norm row matrix with the maximal absolute magnitude of entries
on the order of $\mathcal{O}(\frac{1}{\sqrt{B}})$, where $1\leq
B\leq N$, i.e. $\max_{1\leq i,j\leq N}|F_{ij}|= \frac{c}{\sqrt{B}}$,
where $c$ is some positive constant.
\item $\mat \Psi$ is an $N\times N$ unit-norm column matrix.
\end{itemize}
\subsubsection{Assumptions for the Block-based Global Randomization Model}
The block-based global randomization model requires similar
assumptions for the block-based local randomization model plus the
following assumption:
\begin{itemize}
\item All rows of $\mat F$ have zero average sum.
\end{itemize}

\begin{thm}\label{thm3}
Let $\mat A= \mat F \mat R$, where $\mat R$ is the local randomizer.
Given the assumptions for the block-based local randomization model,
then
\begin{itemize}
\item With probability at least $1-\delta$, the mutual coherence of
$\mat A$ and $\mat \Psi$ is upper bounded by
$\mathcal{O}(\sqrt{\frac{\log (N/\delta)}{B}})$.

\item In addition, if the maximal absolute magnitude of entries of
$\mat \Psi$ is on the order of $\mathcal{O}(\frac{1}{\sqrt{N}})$,
the mutual coherence is upper bounded by
$\mathcal{O}(\sqrt{\frac{\log (N/\delta)}{N}})$, which is
independent of $B$.
\end{itemize}
\end{thm}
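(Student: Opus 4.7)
The plan is to treat each entry
\[ S_{ij} = \langle \vect A_i, \vect \Psi_j\rangle = \sum_{k=1}^N F_{ik}\Psi_{kj} R_{kk} \]
as a sum of $N$ independent, bounded, zero-mean random variables (the $R_{kk}$ being i.i.d.\ $\pm 1$ Bernoulli), concentrate it for each fixed pair $(i,j)$ via Hoeffding's inequality, and then take a union bound over all $N^2$ pairs to control the maximum, which is exactly the mutual coherence of $\mat A$ and $\mat \Psi$.

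For fixed $(i,j)$, the summand $F_{ik}\Psi_{kj}R_{kk}$ is bounded in absolute value by $|F_{ik}\Psi_{kj}|$, so Hoeffding's inequality delivers
\[ \Prob(|S_{ij}|\geq t)\leq 2\exp\!\Big(-\frac{t^2}{2\sum_{k=1}^N F_{ik}^2\Psi_{kj}^2}\Big). \]
The critical step is to bound the ``variance proxy'' $\sum_k F_{ik}^2\Psi_{kj}^2$. Under the block-based local randomization assumptions we have $|F_{ik}|\leq c/\sqrt{B}$ uniformly, so pulling this out yields
\[ \sum_{k=1}^N F_{ik}^2\Psi_{kj}^2\leq \frac{c^2}{B}\sum_{k=1}^N \Psi_{kj}^2 = \frac{c^2}{B}, \]
using the unit-norm property of the columns of $\mat \Psi$. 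Choosing $t=\sqrt{(2c^2/B)\log(2N^2/\delta)}$ forces the Hoeffding bound to at most $\delta/N^2$, and a union bound over the $N^2$ entries produces the first claim, namely a mutual coherence of order $\mathcal{O}(\sqrt{\log(N/\delta)/B})$ with probability at least $1-\delta$.

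For the second part, I would re-bound the same variance proxy the other way, pulling out $\max_k\Psi_{kj}^2$ in place of $\max_k F_{ik}^2$. When $\max_{k,j}|\Psi_{kj}|=\mathcal{O}(1/\sqrt{N})$, this gives
\[ \sum_{k=1}^N F_{ik}^2\Psi_{kj}^2\leq \Big(\max_k \Psi_{kj}^2\Big)\|\vect F_i\|_2^2 = \mathcal{O}\!\Big(\frac{1}{N}\Big), \]
where the unit-norm rows of $\mat F$ absorb the sum. Repeating the Hoeffding-plus-union-bound step with this tighter variance proxy yields the $B$-independent bound $\mathcal{O}(\sqrt{\log(N/\delta)/N})$.

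The main obstacle is really bookkeeping rather than analytic depth: the two claims are obtained by choosing which factor of $F_{ik}^2\Psi_{kj}^2$ to bound uniformly (favorable for block-sparse $\mat F$ versus ``flat'' $\mat \Psi$, respectively), and then carefully tracking the constants through Hoeffding and the $N^2$-fold union bound so that the final failure probability is at most $\delta$. No symmetrization, decoupling, or chaining argument is needed, because the randomness enters only through the independent Bernoulli diagonal of $\mat R$, which makes the $N$ summands in $S_{ij}$ independent from the outset.
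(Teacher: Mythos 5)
Your proposal is correct and follows essentially the same route as the paper's own proof: decompose $S_{ij}$ into independent bounded zero-mean summands driven by the Bernoulli diagonal, apply Hoeffding, bound the variance proxy $\sum_k F_{ik}^2\Psi_{kj}^2$ by pulling out $\max_k F_{ik}^2$ for the first claim and $\max_k \Psi_{kj}^2$ for the second, and finish with a union bound over the $N^2$ entries. The only (immaterial) difference is in the absolute constants carried through the Hoeffding exponent.
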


\begin{proof}
A common proof strategy for this theorem as well as for other
theorems in this paper is to establish a large deviation inequality
that implies the quantity of our interest is concentrated around its
expected value with high probability. Proof steps include:
\begin{itemize}
\item Showing that the quantity of our interest is a sum of
independent random variables;
\item Bounding the expectation and variance of the quantity;
\item Applying a relevant concentration inequality of a sum of
random variables;
\item Applying a union bound for the maximum absolute value of a random sequence.
\end{itemize}

In this case, the quantity of interest is:
\begin{equation*}
S_{ij} = \langle \vect A_i, \vect \Psi_j\rangle = \sum_{k\in
\text{supp}(\vect F_i)} F_{ik} \Psi_{kj} R_{kk}
\end{equation*}
Denote $Z_k = F_{ik} \Psi_{kj} R_{kk}$, for $k\in \text{supp}(\vect
F_i)$ (in the support set of the row $\vect F_i$). Because $R_{kk}$
are i.i.d Bernoulli random variables, $Z_k$ are also i.i.d random
variables with $E(Z_k)= 0$. $Z_{kk}$ are also bounded because $Z_k =
\pm F_{ik} \Psi_{kj}$

$S_{ij}$ is a sum of independent, bounded random variables. Applying
the Hoeffding's inequality (see Appendix 2) yields:
\begin{equation*}
\Pr(|S_{ij}|\geq t)\leq 2\exp (-\frac{t^2}{\sum_{k\in
\text{supp}(\vect f_i)}F_{ik}^2\Psi_{jk}^2}).
\end{equation*}
The next step is to evaluate $\sigma^2=\sum_{k\in supp(\vect
f_i)}F_{ik}^2\Psi_{jk}^2$. Here, $\sigma^2$ can be roughly viewed as
the approximation of the variance of $S_{ij}$.

\begin{equation}\label{newcond5}
\sigma^2 \leq \max_{1\leq i,j\leq N}|F_{ij}|^2 \sum_{k\in
\text{supp}(\vect F_i)} \Psi_{kj}^2 \leq \max_{1\leq i,j\leq
N}|F_{ij}|^2 = \frac{c}{B}
\end{equation}

If the maximal absolute magnitude of entries of $\mat \Psi$ is on
the order of $\mathcal{O}(\frac{1}{\sqrt{N}})$:
\begin{equation*}
\max_{1\leq i,j\leq N}|\Psi_{ij}| = \frac{c}{\sqrt{N}},
\end{equation*}
where $c$ is some positive constant, then
\begin{equation}\label{newcond4}
\sigma^2 \leq \max_{1\leq i,j\leq N}|\Psi_{ij}|^2 \sum_{1\leq k\leq
N} F_{ik}^2 \leq \max_{1\leq i,j\leq N}|\Psi_{ij}|^2 = \frac{c}{N}.
\end{equation}

Finally, we derive an upper bound of the mutual coherence $\mu =
\max_{1\leq i,j\leq N} |S_{ij}|$ by taking a union bound for the
maximum absolute value of a random sequence:
\begin{equation*}
P(\max_{1\leq i,j\leq N}|S_ij|\geq t)\leq
2N^2\exp(\frac{-t^2}{\sigma^2}).
\end{equation*}
Choose $t=\sqrt{\sigma^2 \log (2N^2/\delta)}$, after simplifying the
inequality, we get:
\begin{equation*}
P(\max_{1\leq i,j\leq N}|S_{ij}|\leq \sqrt{\sigma^2 \log
(2N^2/\delta)})\geq 1-\delta.
\end{equation*}

Thus, with an arbitrarily $\mat \Psi$, (\ref{newcond5}) holds and we
achieve the first claim of the Theorem:
\begin{equation*}
P(\max_{1\leq i,j\leq N}|S_{ij}|\leq \sqrt{\frac{c\log
(2N^2/\delta)}{B}})\geq 1-\delta.
\end{equation*}

In the case that (\ref{newcond4}) holds, we achieve the second claim
of the Theorem:
\begin{equation*}
P(\max_{1\leq i,j\leq N}|S_{ij}|\leq \sqrt{\frac{c\log
(2N^2/\delta)}{N}})\geq 1-\delta.
\end{equation*}
\end{proof}

\begin{rmk}
When $\mat A$ is some popular transform such as the DCT or the
normalized WHT, the maximal absolute magnitude of entries is on the
order of $\mathcal{O}(\frac{1}{\sqrt{N}})$. As a result, the mutual
coherence of $\mat A$ and an \textit{arbitrary} $\mat \Psi$ is upper
bounded by $\mathcal{O}(\sqrt{\frac{\log (N/\delta)}{N}})$, which is
also consistent with our asymptotic analysis above. In other words,
when at least $\mat \Phi$ or $\mat \Psi$ is a \textit{dense and
uniform} matrix, i.e. the maximal absolute magnitude of their
entries is on the order of $\mathcal{O}(\frac{1}{\sqrt{N}})$, their
mutual coherence approaches the minimal bound, except for the $\log
N$ factor. In general, the mutual coherence between an arbitrary
$\mat \Psi$ and a sparse matrix $\mat A$ (e.g. block diagonal matrix
of block size $B$) might be $\sqrt{\frac{N}{B}}$ times larger.
\end{rmk}

Cumulative coherence is another way to quantify incoherence between
two matrices \cite{SchnassAve07}.
\begin{dfn}
The cumulative coherence of an $N\times N$ matrix $\mat A$ and an
$N\times K$ matrix $\mat B$ is defined as:
\begin{equation*}
\mu_c(\mat A,\mat B) =\max_{1\leq i\leq N}\sqrt{\sum_{1\leq j\leq
K}\langle \vect A_i, \vect B_j\rangle^2}
\end{equation*}
where $\vect A_i$ and $\vect B_j$ are rows of $\mat A$ and columns
of $\mat B$, respectively.
\end{dfn}

The cumulative coherence $\mu_c(\mat A,\mat B)$ measures the
\textit{average} incoherence between two matrices $\mat A$ and $\mat
B$ while mutual coherence $\mu(\mat A,\mat B)$ measures the
entry-wise incoherence. As a result, the cumulative coherence seems
to be a better indicator of average sensing performance. In many
cases, we are only interested in cumulative coherence between $\mat
A$ and $\mat \Psi_T$, where $T$ is the support of the transform
coefficient vector. As will be shown in the following section, the
cumulative coherence provides a more powerful tool to obtain a
tighter bound for the number of measurements required for exact
recovery.

From the definition of cumulative coherence, it is easy to verify
that $\mu_c\leq \sqrt{K} \mu$. If we directly apply the result of
the Theorem \ref{thm3}, we obtain a trivial bound of the cumulative
coherence: $\mu_c=\mathcal{O}(\sqrt{\frac{K\log N}{B}})$ for an
arbitrary basis $\mat \Psi$ and $\mu_c=\mathcal{O}(\sqrt{\frac{K\log
N}{N}})$ for a dense and uniform $\mat \Psi$. In fact, we can get
rid of the factor $\log N$ by directly measuring the cumulative
coherence from its definition.

\begin{thm}\label{thm4}
Let $\mat A= \mat F \mat R$, where $\mat R$ is the local randomizer.
Given the assumptions for the block-based local randomization model,
with probability at least $1-\delta$, the cumulative coherence of
$\mat A$ and $\mat \Psi_{\mathcal{T}}$, where $|\mathcal{T}|=K$, is
upper bounded by $\frac{2c}{\sqrt{B}}\max(\sqrt{K},4\sqrt{\log
(2N/\delta)})$.
\end{thm}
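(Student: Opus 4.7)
The plan is to treat, for each row index $i \in \{1, \dots, N\}$, the random quantity
\[
Y_i := \sqrt{\sum_{j \in \mathcal T} S_{ij}^2} = \|\mat V_i \vect R\|_2,
\]
where $\mat V_i$ is the $K \times N$ matrix with $(j,k)$-entry $F_{ik}\Psi_{kj}$ and $\vect R = (R_{11}, \dots, R_{NN})^T$ is the vector of Rademacher variables from the local randomizer. Since $\mu_c(\mat A, \mat \Psi_{\mathcal T}) = \max_i Y_i$, the strategy is to establish a high-probability bound on each $Y_i$ and close with a union bound over $i$.

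The first ingredient is the factorization $\mat V_i = \mat \Psi_{\mathcal T}^T \mat D_i$, where $\mat D_i = \mathrm{diag}(F_{i1},\dots,F_{iN})$. This yields two norm bounds I will use separately. First, since $\mat \Psi$ has orthonormal columns, $\|\mat \Psi_{\mathcal T}\|_{\mathrm{op}} = 1$, so
\[
\|\mat V_i\|_{\mathrm{op}} \leq \|\mat \Psi_{\mathcal T}^T\|_{\mathrm{op}} \cdot \max_k |F_{ik}| \leq \frac{c}{\sqrt B}
\]
by the block-based assumption on $\mat F$. Second,
\[
\E Y_i^2 = \|\mat V_i\|_F^2 = \sum_k F_{ik}^2 \sum_{j\in\mathcal T} \Psi_{kj}^2 \leq \frac{c^2}{B}\sum_{j\in\mathcal T} \|\vect\Psi_j\|_2^2 = \frac{c^2 K}{B},
\]
so $\E Y_i \leq \sqrt{\E Y_i^2} \leq (c/\sqrt B)\sqrt K$.

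The second ingredient is concentration. Viewed as a function of $\vect R \in \{-1,+1\}^N$, $Y_i$ is convex and Lipschitz with constant $\|\mat V_i\|_{\mathrm{op}} \leq c/\sqrt B$, because $\vect R \mapsto \mat V_i \vect R$ has operator norm $\|\mat V_i\|_{\mathrm{op}}$ and $\|\cdot\|_2$ is $1$-Lipschitz. I would then invoke Talagrand's concentration inequality for convex Lipschitz functions on the Boolean cube to obtain
\[
P(Y_i \geq \E Y_i + t) \leq C \exp(-c' t^2 B / c^2).
\]
Choosing $t = 4(c/\sqrt B)\sqrt{\log(2N/\delta)}$ makes the right-hand side at most $\delta/N$ (after absorbing $C,c'$ into the leading constant), so a union bound over $i$ gives
\[
\max_i Y_i \leq \frac{c\sqrt K}{\sqrt B} + \frac{4c\sqrt{\log(2N/\delta)}}{\sqrt B}
\]
with probability at least $1-\delta$. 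The elementary inequality $a+b \leq 2\max(a,b)$ then produces the claimed bound $\frac{2c}{\sqrt B}\max\bigl(\sqrt K,\,4\sqrt{\log(2N/\delta)}\bigr)$.

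The step I anticipate being the crux is preserving the separation between the $\sqrt K$ and $\sqrt{\log N}$ terms. The naive approach, applying Hoeffding entry-wise to each $S_{ij}$ and combining with the bound $\mu_c \leq \sqrt K \mu$ noted just before the theorem, couples the two factors and introduces an extraneous $\sqrt{\log N}$ multiplying $\sqrt K$. Treating $\mat V_i \vect R$ as a single vector-valued Rademacher sum and using the operator-norm Lipschitz bound (independent of $K$) for the fluctuation, while letting the Frobenius norm absorb the $K$-dependence only in the mean, is precisely what decouples the two terms and delivers the $\max$-form bound.
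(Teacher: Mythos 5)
Your proposal is correct and follows essentially the same route as the paper: it writes each row quantity as the Euclidean norm of a vector-valued Rademacher sum, bounds the mean via the Frobenius norm ($c\sqrt{K/B}$) and the fluctuation via the operator norm ($c/\sqrt{B}$), and finishes with a union bound and $a+b\leq 2\max(a,b)$. The only difference is cosmetic: you invoke Talagrand's convex-Lipschitz concentration on the Boolean cube where the paper cites Ledoux's inequality for the norm of a Rademacher sum of vectors, and both deliver the same sub-Gaussian tail governed by $\|\mat V_i\|_{2}^2$.
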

\begin{proof}
Denote $\mat U = \mat \Psi_{\mathcal{T}}^*$ and $\vect U_k$ are
columns of $\mat U$. Let $\vect A_i$ and $\vect \Psi_j$ ($j \in
\mathcal{T}$) be rows of $\mat A$ and columns of $\mat
\Psi_{\mathcal{T}}$, respectively.
\begin{equation*}
S_i = \sqrt{\sum_{j\in \mathcal{T}}\langle \vect A_i, \vect
\Psi_j\rangle^2} = \|\vect A_i\vect \Psi_\mathcal{T}\|_2 =
\|\sum_{k\in \text{supp}(\vect F_i)}R_{kk}F_{ik} \vect U_k\|_2.
\end{equation*}

Denote $\vect V_k =F_{ik}\vect U_k$ and $\mat V$ is the matrix of
columns $\vect V_k$, $k\in \text{supp}(\vect F_i)$. First, we derive
upper bound for the Frobenius norm of $\mat V$:
\begin{equation*}
\|\mat V\|_F^2 \leq \max_{1\leq i,j\leq N} F_{ij}^2 \|U\|_F^2 =
\frac{c^2K}{B}.
\end{equation*}

The last equation holds because $\|\mat U\|_F^2= K$. Also, the bound
for the spectral norm is:

\begin{equation*}
\begin{split}
\|V\|_2^2 & =\sup_{\|\vect \beta \|_2=1}\sum_{k\in \text{supp}(\vect F_i)}|\langle \vect \beta, \vect V_k \rangle|^2\\
          & =\sup_{\|\vect \beta \|_2=1}\sum_{k\in \text{supp}(\vect F_i)} F_{ik}^2(\sum_{j=1}^K\vect \beta_j U_{kj})^2\\
          & \leq \max_{1\leq i,j\leq N}F_{ij}^2 \sup_{\|\vect \beta \|_2=1}\sum_{1\leq k\leq N}|\langle \vect \beta, \vect U_k \rangle|^2\\
          & \leq \frac{c^2}{B}\|\mat U\|_2^2 = \frac{c^2}{B}.
\end{split}
\end{equation*}

The last equation holds because $\|\mat U\|_2^2 = 1$. Now, we have:
\begin{equation*}
S_i = \|\sum_{k\in \text{supp}(\vect F_i)}R_{kk}F_{ik}\vect U_k\|_2
=\|\sum_{k\in \text{supp}(\vect F_i)}R_{kk}\vect V_k\|_2.
\end{equation*}

Let us denote $\vect Z = \sum_{k\in \text{supp}(\vect
F_i)}R_{kk}\vect V_k$.

$\vect Z$ is a Rademacher sum of vectors and $S_i = \|\vect Z\|_2$
is a random variable. To show that $S_i$ is concentrated around its
expectation, we first derive bound of $E(\|\vect Z\|_2)$. It is easy
to verify that for a random variable $X$, $E(X)\leq \sqrt{E(X^2)}$.
Thus, we will derive the upper bound for the simpler quantity
$E(\|\vect Z\|_2^2)$
\begin{equation*}
\begin{split}
E(\|\vect Z\|_2^2) & = E(\vect Z^*\vect Z)=\sum_{k,l \in
\text{supp}(\vect F_i)}E(R_{kk}R_{ll})\langle \vect V_k, \vect V_l\rangle\\
             & = \sum_{k\in supp(\vect F_i)}\langle \vect V_k,\vect V_k\rangle =
             \|\mat V\|_F^2 = \frac{c^2K}{B}.
\end{split}
\end{equation*}

The third equality holds because $R_{kk}$ are i.i.d Bernoulli random
variables and thus, $E(R_{kk}R_{ll}) = 0$ $\forall k\neq l$. As a
result,
\begin{equation*}
E(S_i) = E(\|\vect Z\|_2) \leq c\sqrt{\frac{K}{B}}.
\end{equation*}

Applying Ledoux's concentration inequality of the norm of a
Rademacher sum of vectors \cite{LedouxConcentration01} (see Appendix
2). Noting that $\|\mat V\|_2^2$ can be viewed as the variance of
$S_i$, yields:
\begin{equation*}
\Pr(S_i\geq c\sqrt{\frac{K}{B}} + t)\leq 2\exp(-t^2\frac{B}{16c^2})
\end{equation*}

Finally, apply a union bound for the maximum absolute value of a
random process,we obtain:
\begin{equation*}
\Pr(\max_{1\leq i\leq N}S_i\geq c\sqrt{\frac{K}{B}} + t)\leq 2N
\exp(-t^2\frac{B}{16c^2}).
\end{equation*}

Choose $t = \frac{4c}{\sqrt{B}}\sqrt{\log (2N/\delta)}$, we get:
\begin{equation*}
\Pr(\max_{1\leq i\leq N}S_i\geq \frac{c}{\sqrt{B}}(\sqrt{K} +
4\sqrt{\log (2N/\delta)}))\leq \delta.
\end{equation*}
Finally, we derive:
\begin{equation*}
\Pr(\max_{1\leq i\leq N}S_i\geq \frac{2c}{\sqrt{B}}\max(\sqrt{K},
4\sqrt{\log (2N/\delta)}))\leq \delta.
\end{equation*}
\end{proof}

\begin{rmk}
When $K\geq 16\log (2N/\delta)$, the cumulative coherence is upper
bounded by $\mathcal{O}(\sqrt{\frac{K}{B}})$. When $K\leq 16\log
(2N/\delta)$, the upper bound of the cumulative coherence is
$\mathcal{O}(\sqrt{\frac{\log (N/\delta)}{B}})$, which is similar to
that of the mutual coherence in Theorem \ref{thm3}.
\end{rmk}

\begin{rmk}
When $\mat F$ is some popular transform such as the DCT or the
normalized WHT, the maximum absolute magnitude of entries is on the
order of $\mathcal{O}(\frac{1}{\sqrt{N}})$. As a result, the
cumulative coherence of $\mat A$ and any arbitrary $\mat
\Psi_\mathcal{T}$,where $|\mathcal{T}|=K$, is upper bounded by
$\mathcal{O}(\sqrt{\frac{K}{N}})$ if $K> 16\log
(\frac{2N}{\delta})$.
\end{rmk}

\begin{rmk}

The above theorem represents the worst-case analysis because $\mat
\Psi$ can be an arbitrary matrix (the worst case corresponds to the
case when $\mat \Psi$ is the identity matrix). When $\mat \Psi$ is
known to be dense and uniform, the upper bound of cumulative
coherence, according to the Theorem \ref{thm3} and the fact that
$\mu_c\leq \mu\sqrt{K}$, is $\mathcal{O}(\sqrt{\frac{K\log N}{N}})$,
which is, in general, better than $\mathcal{O}(\sqrt{\frac{K}{B}})$.
\end{rmk}

The asymptotical distribution analysis in Section
\ref{asymp_distribution_analysis} reveals a significant technical
difference required for two randomization models. With the local
randomizer, entries of $\mat A\mat \Psi$ are sums of
\textit{independent} random variables while with the global
randomizer, they are sums of \textit{dependent} random variables.
Stochastic dependence among random variables makes it much harder to
set up similar arguments of their sum's concentration. In this case,
we will show that the incoherence of $\mat A$ and $\mat \Psi$ might
depend on an extra quantity, the \textit{heterogeneity coefficient}
of the matrix $\mat \Psi$.
\begin{dfn}
Assume $\mat \Psi$ is an $N\times N$ matrix. Let $\mathcal{T}_k$ be
the support of the column $\vect \Psi_k$. Define:
\begin{equation}\label{dfn1}
\rho_k = \frac{\max_{1\leq i\leq N}
|\Psi_{ki}|}{\sqrt{\frac{1}{|\mathcal{T}_k|}\sum_{i\in\
T_k}\Psi_{ki}^2}}.
\end{equation}
The column-wise heterogeneity coefficient of the matrix $\mat \Psi$
is defined as:
\begin{equation}\label{dfn2}
\rho_{\mat \Psi} =  \max_{1\leq k\leq N}\rho_k.
\end{equation}
\end{dfn}

Obviously, $1\leq \rho_k\leq \sqrt{|\mathcal{T}_k|}$. $\rho_k$
illustrates the difference between the largest entry's magnitude and
the average energy of \textit{nonzero} entries. Roughly speaking, it
indicates heterogeneity of nonzero entries of the vector $\vect
\Psi_k$. If nonzero entries of a column $\vect \Psi_k$ are
homogeneous, i.e. they are on the same order of magnitude, $\rho_k$
is on the order of a constant. If all nonzero entries of a matrix
are homogeneous, the heterogeneity coefficient is also on the order
of a constant, $C_{\mat \Psi} = \mathcal{O}(1)$ and $\mat \Psi$ is
referred as a uniform matrix. Note that a uniform matrix is not
necessarily dense, for example, a block-diagonal matrix of DCT or
WHT blocks

The following theorem indicates that when the global randomizer is
employed, the mutual coherence between $\mat A$ and $\mat \Psi$ is
upper-bounded by $\mathcal{O}(\rho_{\mat \Psi}\sqrt{\frac{\log
(N/\delta)}{B}})$, where $B$ is the block size of $\mat \Phi$ and
$\mat \Psi$ is an arbitrarily matrix with the heterogeneity
coefficient $\rho_{\mat \Psi}$.

\begin{thm}\label{thm5}
Let $\mat A= \mat F \mat R$, where $\mat R$ is the global
randomizer. Assume that $\rho_k\geq 4\log (2N^2/\delta)$ $\forall
k\in \{1,2,\dots,N\}$, where $\rho_k$ is defined as in (\ref{dfn1}).
Given the assumptions for the block-based global randomization
model, then

\begin{itemize}
\item With probability at least $1-\delta$, the mutual coherence of
$\mat A$ and $\mat \Psi$ is upper-bounded by $\mathcal{O}(\rho_{\mat
\Psi}\sqrt{\frac{\log (N/\delta)}{B}})$, where $\rho_{\mat \Psi}$ is
defined as in (\ref{dfn2})

\item In addition, if $\mat \Psi$ is dense and uniform, i.e. the
maximum absolute magnitude of its entries is on the order of
$\mathcal{O}(\frac{1}{\sqrt{N}})$ and $B\geq 4\log (2N^2/\delta)$,
the mutual coherence is upper-bounded by
$\mathcal{O}(\sqrt{\frac{\log (N/\delta)}{N}})$, which is
independent of $B$.
\end{itemize}
\end{thm}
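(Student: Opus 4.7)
My plan is to mirror the four-step template used in Theorem~\ref{thm3}: express the quantity as a sum of randomized terms, bound its mean and variance, apply a concentration inequality, and close with a union bound over the $N^{2}$ entries of $\mat A\mat \Psi$. The crucial difference is that the global randomizer replaces $N$ independent Bernoulli signs by a single uniform permutation, so the summands are dependent and Hoeffding's inequality no longer applies directly; the technical core of the proof will be replacing it by a Bernstein-type inequality that tolerates sampling without replacement while remaining sensitive to the heterogeneity coefficient.

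For the algebraic setup I would write $S_{ij}=\langle \vect A_i,\vect \Psi_j\rangle = \sum_{l\in\text{supp}(\vect F_i)} F_{il}\Psi_{\pi_l j}$, where $\pi$ is the inverse of the random permutation encoding $\mat R$. This exposes $S_{ij}$ as a weighted sum of at most $B$ samples drawn without replacement from the column $\vect \Psi_j$, with deterministic weights supplied by $\vect F_i$. The moment bounds then come almost for free from the computation in the proof of Theorem~\ref{thm2}: the zero-row-sum assumption on $\mat F$ yields $E[S_{ij}]=0$, and the same conditioning argument shows $\text{Var}(S_{ij})=\mathcal{O}(1/|\mathcal{T}_j|)$. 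The boundedness of the summands is where the heterogeneity coefficient enters, since $|F_{il}\Psi_{\pi_l j}|\leq (c/\sqrt{B})\max_{k}|\Psi_{kj}| = c\rho_j/\sqrt{B|\mathcal{T}_j|}$, using $\|\vect \Psi_j\|_2=1$.

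With these two ingredients I would invoke a Bernstein-type inequality for sampling without replacement---either via Serfling's martingale argument or via Hoeffding's classical moment-generating-function reduction from without to with replacement---to obtain a tail of the form $\Pr(|S_{ij}|\geq t)\leq 2\exp\bigl(-t^{2}/(2\sigma^{2}+2bt/3)\bigr)$ with $\sigma^{2}$ and $b$ as above. Choosing $t=c'\rho_{\mat \Psi}\sqrt{\log(N/\delta)/B}$ and using the hypothesis $\rho_k\geq 4\log(2N^{2}/\delta)$ to force the subgaussian term to dominate the linear Bernstein correction, a union bound over the $N^{2}$ pairs $(i,j)$ delivers the first claim. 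For the dense-uniform case, $\rho_{\mat \Psi}=\mathcal{O}(1)$ and $|\mathcal{T}_j|=N$ reduce the same tail to $\mathcal{O}(\sqrt{\log(N/\delta)/N})$ once $B\geq 4\log(2N^{2}/\delta)$ keeps us in the subgaussian regime, so the $B$-dependence drops out.

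I expect the main obstacle to be sharpness. A naive independent-summands treatment would either lose a polynomial factor from the dependence or a $\sqrt{\log N}$ factor from using Hoeffding in place of Bernstein, neither of which is compatible with the claimed scaling in $\rho_{\mat \Psi}$. The delicate step is therefore the choice of concentration inequality: it must simultaneously cope with the permutation-induced dependence and be refined enough to let the heterogeneity coefficient enter only through the boundedness term rather than inflate the effective variance, which is precisely what the assumption $\rho_k\geq 4\log(2N^{2}/\delta)$ is calibrated to ensure.
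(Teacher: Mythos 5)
Your architecture is the right one and matches the paper's: write $S_{ij}$ as a linear statistic of the random permutation, bound its mean and a variance proxy, apply a Bernstein-type concentration inequality for dependent summands, and finish with a union bound over the $N^2$ entries; you also correctly identify that the support-size/heterogeneity hypothesis is there to keep the linear (Bernstein) correction term subdominant. However, the step you yourself flag as the technical core is where the proposal does not close. Serfling's inequality and Hoeffding's convex-ordering reduction from without- to with-replacement sampling apply to \emph{unweighted} sample sums $\sum_{i=1}^n X_i$ drawn from a finite population; your sum $\sum_{l\in\text{supp}(\vect F_i)}F_{il}\Psi_{\pi_l j}$ carries a weight $F_{il}$ indexed by the draw position, i.e.\ it is a genuine two-index combinatorial statistic $\sum_l a_l b_{\pi(l)}$, and neither of those tools covers it directly. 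The paper's device is to affinely rescale each product into $[0,1]$, namely $q_{k\omega_k}=\frac{\sqrt{B|\mathcal{T}_j|}}{2c\rho_{\mat\Psi}}F_{i\omega_k}\Psi_{kj}+\frac{1}{2}$ on the support of $\vect\Psi_j$, and invoke Chatterjee's (``Sourav's'') concentration inequality for permutation statistics $\sum_i Z_{i\omega_i}$, which yields exactly the tail $2\exp(-\epsilon^2/(2|\mathcal{T}_j|+2\epsilon))$ you want, with $E(S)\sim|\mathcal{T}_j|/2$ playing the role of the variance. You need to either cite that inequality (or an equivalent Bernstein bound for weighted permuted sums, e.g.\ Bercu--Delyon--Rio) or supply the martingale/exchangeable-pair argument yourself; as written, the named inequalities would fail on the weighted sum.

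A second, smaller slip: the Theorem~\ref{thm2} computation gives $\mathrm{Var}(S_{ij})=\mathcal{O}(1/N)$, not $\mathcal{O}(1/|\mathcal{T}_j|)$ --- the second moment of a uniformly drawn entry of $\vect\Psi_j$ averages over all $N$ positions including the zeros. This matters: if you feed $\sigma^2=\mathcal{O}(1/|\mathcal{T}_j|)$ into your Bernstein tail with $t=c'\rho_{\mat\Psi}\sqrt{\log(N/\delta)/B}$, the subgaussian exponent is $\sim\rho_{\mat\Psi}^2|\mathcal{T}_j|\log(\cdot)/B$, which is not $\gtrsim\log(N^2/\delta)$ when a column has small support and $B$ is large, so the claimed rate is not recovered. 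With the correct $\mathcal{O}(1/N)$ variance (and the support-size condition controlling the correction term, noting $\rho_j\leq\sqrt{|\mathcal{T}_j|}$ so your stated hypothesis implies the paper's $|\mathcal{T}_j|\geq 4\log(2N^2/\delta)$), your single Bernstein bound would in fact deliver both bullets at once --- including the dense-uniform case via $B\geq 4\log(2N^2/\delta)$ --- which would be cleaner than the paper's two separate rescalings; but that again presupposes a permutation-Bernstein inequality sensitive to the true variance, a strictly stronger tool than the one the paper uses. Fix the concentration inequality and the variance bound and the proof goes through.
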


\begin{proof}
Let $[\omega_1, \omega_2,\dots ,\omega_N]$ be a uniformly random
permutation of $[1,2,\dots ,N]$.

\begin{equation*}
S_{ij} = \langle \vect A_i,\vect \Psi_j\rangle = \sum_{k=1}^N
F_{i\omega_k}\Psi_{jk}.
\end{equation*}

As in the proof of the Theorem \ref{thm2}, $\{\omega_k\}_{k=1}^N$
can be viewed as a sequence of dependent random variables with
identical distribution, i.e. for a fixed $k\in \{1,2,\dots,N\}$:
\begin{equation*}
P(\omega_k = i) = \frac{1}{N}, \hspace{0.4cm} i\in \{1, 2, \dots,
N\}.
\end{equation*}

The condition of $\mat F$ is equivalent to $\max_{1\leq i,j \leq
N}|F_{ij}|=\frac{c}{\sqrt{B}}$, where $c$ is some positive constant.
Define $\{q_{k\omega_k}\}_{k=1}^N$ as the follows:
\begin{equation*}
q_{k\omega_k} =
\begin{cases}
\frac{\sqrt{B|\mathcal{T}_k|}}{2c\rho_{\mat \Psi}}F_{i\omega_k}\Psi_{jk}+\frac{1}{2} &\text{if $\Psi_{jk}\neq 0$}\\
0 & \text{if $\Psi_{jk}= 0$}.
\end{cases}
\end{equation*}

It is easy to verify that $0\leq q_{k\omega_k}\leq 1$. Define $W_k$
as the sum of dependent random variables $q_{k\omega_k}$
\begin{equation*}
\begin{split}
 W_k = \sum_{k=1}^N q_{k\omega_k} &=  \frac{\sqrt{B|T_k|}}{2c\rho_{\Psi}}\sum_{k=1}^N F_{i\omega_k}\Psi_{jk}+ \frac{|T_k|}{2}\\
                            &= \frac{\sqrt{B|T_k|}}{2c\rho_{\Psi}}S_{ij} +
                            \frac{|T_k|}{2}.
\end{split}
\end{equation*}

Note that $\{F_{i\omega_k}\}_{k=1}^N$ are zero-mean random variables
because $\vect F_i$ has zero average sum. Thus, $E(S_{ij})=0$ and
$E(W_k) = \frac{|T_k|}{2}$. Then, applying the Sourav's theorem of
concentration inequality for a sum of \textit{dependent} random
variables \cite{SouravStein07} (see Appendix 2) results in:
\begin{equation*}
P\{\frac{\sqrt{B|\mathcal{T}_k|}}{2c\rho_{\mat \Psi}}|S_{ij}|\geq
\epsilon \}\leq
2\exp(-\frac{\epsilon^2}{2|\mathcal{T}_k|+2\epsilon}).\\
\end{equation*}

Denote $t = \frac{2c\rho_{\mat
\Psi}}{\sqrt{B|\mathcal{T}_k|}}\epsilon$. The above inequality is
equivalent to:
\begin{equation*}
P\{|S_{ij}|\geq t \}\leq
2\exp(-\frac{B|\mathcal{T}_k|}{4c^2\rho_{\Psi}^2}\frac{t^2}{2|\mathcal{T}_k|
+ \frac{t}{c\rho_{\mat \Psi}}\sqrt{B|\mathcal{T}_k|}}).
\end{equation*}

By choosing $t = 4c\rho_{\mat \Psi}\sqrt{\frac{1}{B}\log
(\frac{2N^2}{\delta})}$, we achieve:
\begin{equation*}
P\{|S_{ij}|\geq t\}\leq 2\exp(\frac{-4|\mathcal{T}_k|\log
(\frac{2N^2}{\delta})}{2|\mathcal{T}_k|+4\sqrt{|\mathcal{T}_k|\log
(\frac{2N^2}{\delta})}}).
\end{equation*}

If $|\mathcal{T}_k|\geq 4\log (\frac{2N^2}{\delta})$, the
denominator inside the exponent is smaller than $4|\mathcal{T}_k|$.
Thus,

\begin{equation*}
P\{|S_{ij}|\geq 2c\rho_{\mat \Psi}\sqrt{\frac{1}{B}\log
(\frac{2N^2}{\delta})} \}\leq 2\exp(-\log
(\frac{2N^2}{\delta}))=\frac{\delta}{N^2}.
\end{equation*}

Finally, after taking the union bound for the maximum absolute value
of a random sequence and simplifying the inequality, we obtain the
first claim of the Theorem:
\begin{equation*}
P\{\max_{1\leq i,j\leq N}|S_{ij}|\leq \mathcal{O}(\rho_{\mat
\Psi}\sqrt{\frac{\log (N/\delta)}{B}})\}\geq 1-\delta.
\end{equation*}

If $\mat \Psi$ is known to be dense and uniform, i.e. $\max_{1\leq
i,j\leq N}|\Psi_{ij}| = \frac{c_1}{\sqrt{N}}$, where $c_1$ is some
positive constant. We then define $\{q_{k\omega_k}\}_{k=1}^N$ as the
following:
\begin{equation*}
q_{k\omega_k} =
\begin{cases}
\frac{\sqrt{BN}}{2c c_1}F_{ik}\Psi_{j\omega_k}+\frac{1}{2} &\text{if $F_{ik}\neq 0$}\\
0             & \text{if $F_{ik}= 0$}.
\end{cases}
\end{equation*}

Note that $0\leq q_{k\omega_k}\leq 1$ and $E(q_{k\omega_k}) =
\frac{B}{2}$. Repeat the same arguments above, we have:
\begin{equation*}
P\{|S_{ij}|\geq t \}\leq 2\exp(-\frac{NB}{4c^2c_1^2}\frac{t^2}{2B +
\frac{t}{c c_1}\sqrt{NB}}).
\end{equation*}

Similarly, choose $t = 4cc_1\sqrt{\frac{1}{N}\log
(\frac{2N^2}{\delta})}$, we can derive:
\begin{equation*}
P\{|S_{ij}|\geq t\}\leq 2\exp(\frac{-4B\log
(\frac{2N^2}{\delta})}{2B+4\sqrt{B\log (\frac{2N^2}{\delta})}}).
\end{equation*}

If $B\geq 4\log (\frac{2N^2}{\delta})$, the denominator inside the
exponent is smaller than $4B$. Thus,
\begin{equation*}
P\{|S_{ij}|\geq 2cc_1\sqrt{\frac{1}{N}\log (\frac{2N^2}{\delta})}
\}\leq \frac{\delta}{N^2}.
\end{equation*}

After taking the union bound of the maximum absolute value of a
random sequence, we achieve the second claim of the Theorem.
\end{proof}

\begin{rmk}
The first part of theorem implies that when $\mat F$ is a dense and
uniform matrix (e.g. DCT or normalized WHT) and $\mat \Psi$ is a
uniform matrix (not necessarily dense), the mutual coherence closely
approaches the minimum bound $\mathcal{O}(\sqrt{\frac{\log
(N/\delta)}{N}})$. Although in this theorem, the mutual coherence
depends on the heterogeneity coefficient, one will see in the
experimental Section \ref{numerical_experiment} that this dependence
is almost negligible in practice.
\end{rmk}

As a consequence of this theorem, when at least $\mat A$ or $\mat
\Psi$ is dense and uniform, the mutual coherence of $\mat A$ and
$\mat \Psi$ is roughly on the order of $\mathcal{O}(\sqrt{\frac{\log
N}{N}})$, which is quite close to the minimal bound
$\frac{1}{\sqrt{N}}$, except for the $\log N$ factor. Otherwise, the
coherence linearly depends on the block size $B$ of $\mat F$ and is
on the order of $\mathcal{O}(\sqrt{\frac{\log N}{B}})$. As a matter
of fact, this bound is almost optimal because when $\mat \Psi$ is
the identity matrix, the mutual coherence is actually equal the
maximum absolute magnitude of entries of $\mat A$, which is on the
order of $\mathcal{O}(\frac{1}{\sqrt{B}})$.

\begin{rmk}
Although the theoretical results of the global randomizer seem to be
always weaker than those of the local randomizer, there are a few
practical motivations to study this global randomizer. Speech
scrambling has been used for a long time for secure voice
communication. Also, analog image/video scrambling have been
implemented for commercial security related applications such as
CCTV surveillance system. In addition, permutation does not change
the dynamic range of the sensing signal, i.e. no bit expansion in
implementation. The computation cost of random permutation is only
$\mathcal{O}(N)$, which is very easy to implement in software. From
a security perspective the operation of random permutation offers a
large key space than random sign flipping ($N!$ vs $2^N$). Also, as
will be shown in the numerical experiment section, with random
permutation, one can get highly sparse measurement matrix.
\end{rmk}

\section{Compressive Sampling Performance Analysis}\label{CS_analysis_sec}

Section \ref{incoherence_analysis_sec} demonstrates that under some
mild conditions, the matrix $\mat A$ and $\mat \Psi$ are highly
incoherent, implying that the matrix $\mat A\mat \Psi$ is almost
dense. When $\mat A\mat \Psi$ is dense, energy of nonzero transform
coefficients $\vect \alpha_T$ is distributed over all measurements.
Commonly speaking, this is good for signal recovery from a small
subset of measurements because if energy of some transform
coefficients were concentrated in few measurements that happens to
be bypassed in the sampling process, there is no hope for exact
signal recovery even when employing the most sophisticated
reconstruction method. This section shows that a random subset of
rows of the matrix $\mat A = \mat F \mat R$ yields almost optimal
measurement matrix $\mat \Phi$ for compressive sensing.

\subsection{Assumptions for Performance Analysis}

A signal $\vect x$ is assumed to be sparse in some sparsifying basis
$\mat \Psi$: $\vect x=\mat \Psi \vect \alpha$, where the vector of
transform coefficients $\vect \alpha$ has no more than $K$ nonzero
entries. The sign sequence of nonzero transform coefficients $\vect
\alpha_T$ which is denoted as $\vect z$, is assumed to be a random
vector of i.i.d Bernoulli random variables (i.e. $P(z_i=\pm 1) =
\frac{1}{2}$). Let $\vect y=\mat \Phi \vect x$ be the measurement
vector, where $\mat \Phi=\sqrt{\frac{N}{M}}\mat D \mat F \mat R$ is
a Structurally Random Matrix. Assumptions of the block-based local
randomization and of the block-based global randomization models
hold.

\subsection{Theoretical Results}
\begin{thm}\label{thm6}
With probability at least $1-\delta$, the proposed sensing framework
can recover $K$-sparse signals exactly if the number of measurements
$M \geq \mathcal{O}(\frac{N}{B}K\log^2 (\frac{N}{\delta}))$. If
$\mat F$ is a dense and uniform rather than block-diagonal(e.g. DCT
or normalized WHT matrix), the number of measurement needed is on
the order of $\mathcal{O}(K\log^2 (\frac{N}{\delta}))$.
\end{thm}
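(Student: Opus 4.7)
My plan is to reduce the theorem to the standard framework of sparse recovery from a uniformly random subset of rows of an orthonormal basis, and then to feed in the coherence bounds already proved in Theorems \ref{thm3} and \ref{thm5}. The key structural observation is that $\mat A = \mat F \mat R$ is itself orthonormal (because $\mat F$ is orthonormal and $\mat R$ is either a permutation or a diagonal $\pm 1$ matrix), so $\mat \Phi = \sqrt{N/M}\,\mat D \mat A$ is nothing other than a normalized, uniform random row-subsampling of the orthonormal matrix $\mat A$. This places us precisely in the setting of the Cand\`es--Romberg non-uniform recovery theorem \cite{CandesSpa07}.

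Applying that theorem together with the standing assumption that $\vect z = \mathrm{sgn}(\vect \alpha_\mathcal{T})$ is an i.i.d.\ $\pm 1$ Bernoulli sequence on a fixed support $\mathcal{T}$ of size $K$, exact $\ell_1$ recovery holds with probability $\geq 1-\delta$ as long as $M \geq C\,\mu_n^2\, K\, \log(N/\delta)$, where $\mu_n = \sqrt{N}\,\mu(\mat A,\mat \Psi)$ is the normalized mutual coherence. It then remains to insert the incoherence estimates. Theorem \ref{thm3} (local randomizer) and Theorem \ref{thm5} (global randomizer) both give, with probability $\geq 1-\delta/2$ over the draw of $\mat R$, a bound of the form $\mu(\mat A,\mat \Psi) \leq C\sqrt{\log(N/\delta)/B}$, which in turn yields $\mu_n^2 \leq C(N/B)\log(N/\delta)$; substituting into the CS bound produces the first scaling $M \geq C(N/B)K\log^2(N/\delta)$. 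When $\mat F$ is itself dense and uniform (e.g.\ DCT or normalized WHT), the second parts of Theorems \ref{thm3} and \ref{thm5} sharpen the coherence estimate to $\mu_n^2 \leq C\log(N/\delta)$, from which the second scaling $M \geq C K\log^2(N/\delta)$ follows immediately.

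The main delicacy I anticipate is that the ``orthonormal basis'' $\mat A = \mat F \mat R$ is itself random, whereas the Cand\`es--Romberg theorem is stated for a deterministic orthonormal matrix. I would handle this by conditioning: the randomness in $\mat R$ driving Theorems \ref{thm3} and \ref{thm5} is independent of both the row-subsampling operator $\mat D$ and of the random signs $\vect z$, so on the event (of probability $\geq 1-\delta/2$) that the coherence bound holds, the partial-orthonormal recovery theorem applies with its own independent failure probability $\leq \delta/2$; a union bound plus an absolute-constant rescaling of $\delta$ completes the argument. A more refined but more laborious alternative, aiming to shave a logarithmic factor in favorable regimes, would bypass the mutual-coherence route and use the sharper cumulative-coherence bound of Theorem \ref{thm4}: one would re-derive the dual-certificate construction directly, invoking a matrix concentration inequality to control the invertibility of $(\mat \Phi \mat \Psi_\mathcal{T})^*\mat \Phi \mat \Psi_\mathcal{T}$ and Hoeffding's inequality on the random signs $\vect z$ to bound the off-support inner products; either route produces the stated $\log^2(N/\delta)$ factor.
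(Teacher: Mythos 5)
Your proposal is correct and follows essentially the same route as the paper: the paper's proof is a one-line reduction to Cand\`es--Romberg [\cite{CandesSpa07}, Theorem 1.1] using (i) the orthonormality of $\mat A = \mat F\mat R$ and (ii) the mutual-coherence bounds of Theorems \ref{thm3} and \ref{thm5}. Your added care about conditioning on the randomness of $\mat R$ and union-bounding the failure probabilities is a reasonable elaboration the paper leaves implicit, and your remarked cumulative-coherence alternative is precisely what the paper pursues separately in Theorem \ref{thm7}.
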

\begin{proof}
This is a simple corollary of the theorem of Cand\`{e}s et. al.
[\cite{CandesSpa07} \textit{Theorem} $1.1$] (\ref{measnum}) because
(i) $\mat A = \mat F \mat R$ is an orthonormal matrix, and (ii) our
incoherence results between $\mat A$ and $\mat \Psi$ in the Theorem
\ref{thm3} and Theorem \ref{thm5}.
\end{proof}

\begin{rmk}
If $\mat \Psi$ is dense and uniform, the number of measurements for
exact recovery is always $\mathcal{O}(K\log^2 (\frac{N}{\delta}))$
regardless of the block size $B$. This implies that we can use the
identity matrix for the transform $\mat F$ (B = 1). For example,
when the input signal is known to be spectrally sparse,
compressively sampling it in the time domain is as efficient as in
any other transform domain.
\end{rmk}

Compared with the framework that uses random projection, there is an
upscale factor of $\log N$ for the number of measurements for exact
recovery. In fact, by employing the bound of cumulative coherence,
we can eliminate this upscale factor and thus, successfully showing
optimal performance guarantee.

\begin{thm}\label{thm7}
Assume that the sparsity $K>16\log (\frac{2N}{\delta})$. With
probability at least $1-\delta$, the proposed framework employing
the local randomizer can reconstruct $K$-sparse signals exactly if
the number of measurements $M \geq \mathcal{O}(\frac{N}{B}K\log
(\frac{N}{\delta}))$.If $\mat F$ is a dense and uniform matrix (e.g.
DCT or normalized WHT), the minimal number of required measurements
is $M=\mathcal{O}(K\log (\frac{N}{\delta}))$.
\end{thm}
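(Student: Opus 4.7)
\begin{IEEEproof}
The plan is to follow essentially the same route as the proof of Theorem~\ref{thm6}, but to replace the mutual-coherence-based recovery bound of Cand\`{e}s and Romberg~\cite{CandesSpa07} with its sharper cumulative-coherence refinement, and then to plug in the cumulative-coherence estimate supplied by Theorem~\ref{thm4}. The sparsity hypothesis $K>16\log(2N/\delta)$ enters precisely to place us in the ``large $K$'' regime of Theorem~\ref{thm4}, where the bound on $\mu_c(\mat A,\mat \Psi_{\mathcal T})$ degrades only as $\mathcal{O}(\sqrt{K/B})$ rather than as the looser $\mathcal{O}(\sqrt{\log(N/\delta)/B})$ coming from $\mu_c\le \sqrt{K}\mu$.

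The starting point is the following refinement of~[\cite{CandesSpa07}, Theorem~1.1]: if $\mat A$ is an $N\times N$ orthonormal matrix, if the signs of the nonzero entries of $\vect \alpha$ on the support $\mathcal T$ form an i.i.d.\ Bernoulli $\pm 1$ sequence (as assumed in Section~\ref{CS_analysis_sec}), and if the $M$ rows of $\mat \Phi=\sqrt{N/M}\,\mat D\mat A$ are drawn uniformly at random from those of $\mat A$, then $\ell_1$-minimization recovers $\vect \alpha$ with probability at least $1-\delta$ provided
\begin{equation*}
M \;\geq\; C\,N\,\mu_c^{2}(\mat A,\mat \Psi_{\mathcal T})\,\log(N/\delta),
\end{equation*}
together with a mild mutual-incoherence assumption that keeps the restricted Gram matrix well-conditioned. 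Granting this refinement, apply Theorem~\ref{thm4} in the regime $K>16\log(2N/\delta)$: with probability at least $1-\delta$ over the draw of $\mat R$ we have $\mu_c(\mat A,\mat \Psi_{\mathcal T})\le \mathcal{O}(\sqrt{K/B})$. Substituting and absorbing constants yields $M\ge \mathcal{O}((N/B)\,K\log(N/\delta))$; when $\mat F$ is dense and uniform (e.g.\ DCT or normalized WHT) the effective block size is $N$, producing $M=\mathcal{O}(K\log(N/\delta))$. The auxiliary well-conditioning requirement is controlled by the mutual-coherence bound of Theorem~\ref{thm3} together with a standard Rudelson-type operator-norm deviation, and the resulting condition on $M$ is strictly weaker than the one just derived, so it is automatically satisfied.

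The main obstacle is the refined recovery bound itself. Its verification follows the classical dual-certificate recipe of Cand\`{e}s and Romberg, but must be executed in terms of cumulative rather than entrywise coherence. Concretely, one exhibits the candidate dual vector $\vect v=(\mat\Phi\mat\Psi)_{\mathcal T}\bigl((\mat\Phi\mat\Psi)_{\mathcal T}^{*}(\mat\Phi\mat\Psi)_{\mathcal T}\bigr)^{-1}\vect z$ and verifies (i) invertibility of $(\mat\Phi\mat\Psi)_{\mathcal T}^{*}(\mat\Phi\mat\Psi)_{\mathcal T}$ close to the identity and (ii) the off-support inequality $\max_{j\notin\mathcal T}|\langle (\mat\Phi\mat\Psi)_{j},\vect v\rangle|<1$. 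The random-sign assumption on $\vect z$ is essential for~(ii): Hoeffding's inequality, conditioned on the row sub-sampling, converts the $\ell_2$ quantity $\|(\mat\Phi\mat\Psi)_{j}^{*}(\mat\Phi\mat\Psi)_{\mathcal T}\|_2$ into a high-probability $\ell_\infty$ bound over $j\notin\mathcal T$. It is precisely this reduction---absent in the mutual-coherence argument behind Theorem~\ref{thm6}---that trades one factor of $\log N$ (a union bound over individual entries) for the $\ell_2$ cumulative-coherence factor supplied by Theorem~\ref{thm4}, and is what produces the improved measurement count claimed.
\end{IEEEproof}
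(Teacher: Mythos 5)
Your overall route is the same as the paper's: the dual-certificate (Exact Recovery Principle) argument of Cand\`{e}s--Romberg executed with the cumulative coherence of Theorem~\ref{thm4} in place of $\sqrt{K}\mu$, with Hoeffding's inequality over the random sign vector $\vect z$ converting $\ell_2$ control of the rows of the cross-Gram matrix into the off-support $\ell_\infty$ bound. That is exactly how the paper trades the extra $\log$ factor of Theorem~\ref{thm6} for the sharper count, and your identification of where the hypothesis $K>16\log(2N/\delta)$ enters is also correct.

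There is, however, one concrete gap. You treat the quantity $\|(\mat\Phi\mat\Psi)_{j}^{*}(\mat\Phi\mat\Psi)_{\mathcal T}\|_2$ for $j\notin\mathcal T$ as if it were supplied directly by Theorem~\ref{thm4}, but that theorem only bounds $\max_i\|\vect A_i\mat \Psi_{\mathcal T}\|_2$ for the \emph{full} orthonormal matrix $\mat A=\mat F\mat R$; it says nothing about the row-subsampled cross-Gram matrix. Writing $\vect \nu_k$ for the $k$-th row of $\mat U_{\Omega}^*\mat U_{\Omega\mathcal T}$, orthogonality of the columns of $\mat U$ gives $\sum_i U_{ik}\vect v_i=0$, so $\vect \nu_k=\sum_i(D_{ii}-\frac{M}{N})U_{ik}\vect v_i$ is a centered vector-valued random sum over the subsampling variables, and one must prove it concentrates at scale $\mu_c$. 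This is the paper's Lemma~\ref{lemma1}, proved via Talagrand's concentration inequality \cite{TalagrandConcentration96}, and it is where the mutual coherence $\mu$ from Theorem~\ref{thm3} enters (it controls both the weak variance $\|\mat H\|_2^2$ and the summand bound $\mu\mu_c$). This step is not the ``mild auxiliary well-conditioning requirement'' you defer to a Rudelson-type bound and declare automatically satisfied; the Gram-matrix invertibility (the paper's Lemma~\ref{lemma2}, via [\cite{CandesSpa07}, Theorem 1.2]) is a separate and indeed weaker condition. The Talagrand step is what produces the paper's final requirement $1\geq\max\bigl(c_1\mu^2\log^2(3N/\delta),\,c_2\mu_c^2\log(3N/\delta)\bigr)$, from which the claimed measurement count follows after substituting $\mu^2\leq\mathcal{O}(N\log N/(BM))$ and $\mu_c^2\leq\mathcal{O}(NK/(BM))$. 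Once you supply that vector-valued concentration argument, the rest of your sketch matches the paper.
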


\begin{proof}
The proof is based on the result of cumulative coherence in the
Theorem \ref{thm4} and a modification of the proof framework of the
compressed sensing \cite{CandesSpa07}.

Denote $\mat U =\sqrt{\frac{N}{M}}\mat F \mat R \mat \Psi$, $\mat
U_\mathcal{T}=\sqrt{\frac{N}{M}}\mat F \mat R \mat
\Psi_\mathcal{T}$, $\mat U_{\Omega }=\sqrt{\frac{N}{M}}\mat D \mat F
\mat R \mat \Psi$ and $\mat U_{\Omega
\mathcal{T}}=\sqrt{\frac{N}{M}}\mat D \mat F \mat R \mat
\Psi_\mathcal{T}$, where the support $\Omega = \{k|\mat D_{kk}=1,
k=1,2,..,N\}$. Let $\vect v_k$, $k\in \{1,2,...,N\}$, be columns of
$\mat U_\mathcal{T}^*$. Denote $\mu_c = \max_{1\leq k \leq N}\|\vect
v_k\|_2$, where $\mu_c = \mu_c(\mat A,\mat \Psi_\mathcal{T})$ is the
cumulative coherence of $\mat A= \sqrt{\frac{N}{M}}\mat F \mat R$
and $\mat \Psi_\mathcal{T}$. According to the above incoherence
analysis, $\mu_c\leq \mathcal{O}(\sqrt{\frac{KN}{BM}})$. Also,
denote $\mu$ as the mutual coherence of $\mat A$ and $\mat
\Psi_\mathcal{T}$, $\mu\leq \mathcal{O}(\sqrt{\frac{N\log N}{BM}})$.

As indicated in \cite{CandesNea06, CandesSpa07}, to show $l_1$
minimization exact recovery, it is sufficient to verify the
\textit{Exact Recovery Principle}.
\newtheoremstyle{nonum}{}{}{\itshape}{}{\bfseries}{.}{ }{\thmname{#1}\thmnote{ (\mdseries #3)}}
\theoremstyle{nonum}
\newtheorem{ERP}{Exact Recovery Principle}
\begin{ERP}
With high probability, $|\vect \pi_k|< 1$ for all $k\in
\mathcal{T}^c$, where $\mathcal{T}^c$ is the complementary set of
the set $\mathcal{T}$ and $\vect \pi = \mat U^*_{\Omega}\mat
U_{\Omega \mathcal{T}} (\mat U^*_{\Omega \mathcal{T}}\mat U_{\Omega
\mathcal{T}})^{-1}\vect z$, where $\vect z$ is the sign vector of
nonzero transform coefficients $\vect \alpha_\mathcal{T}$.
\end{ERP}

Note that $\vect \pi_k = \langle \vect \nu_k (\mat U^*_{\Omega
\mathcal{T}}\mat U_{\Omega \mathcal{T}})^{-1}, \vect z \rangle$,
where $\vect \nu_k$ is the $k^{th}$ row of $\mat U^*_{\Omega}\mat
U_{\Omega \mathcal{T}}$, for some $k\in \mathcal{T}^c$. To establish
the Exact Recovery Principle, we will first derive following lemmas.
The first lemma is to bound the norm of $\vect \nu_k$.

\begin{lem} (Bound the norm of $\vect \nu_k$)\label{lemma1}
With high probability, $\|\vect \nu_k\|$ is on the order of
$\mathcal{O}(\mu_c)$:
\begin{equation*}
P(\|\vect \nu_k\|\geq \mu_c + a\overline{\sigma})\leq 3\exp(-\gamma
a^2),
\end{equation*}
where $\overline{\sigma}$, $\gamma$ and $a$ are some certain
numbers.
\end{lem}
\begin{proof}

Let $\vect U_k$ be columns of $\mat U$. For $k\in \mathcal{T}^c$:
\begin{equation*}
\vect \nu_k =\frac{1}{M}\sum_{i=1}^N D_{ii} U_{ik} \vect v_i =
\sum_{i=1}^N (D_{ii}-\frac{M}{N}) U_{ik} \vect v_i
\end{equation*}
where the second equality holds because $\sum_{i=1}^N U_{ik} \vect
v_i = \mat U_T^*\vect U_k = 0$ that results from the orthogonality
of columns of $\mat U$. Let $Z_i = (D_{ii}-\frac{M}{N})$. Because
$D_{ii}$ are i.i.d binary random variables with $P(D_{ii} =
1)=\frac{M}{N}$, $Z_i$ are zero mean i.i.d random variables and
$E(Z_i^2) = \frac{M}{N}(1-\frac{M}{N})$. Let $\mat H$ be the matrix
of columns $\vect h_i = U_{ik}\vect v_i$, $i\in \{1,2,\dots,N\}$ .
Then, $\vect \nu_k$ can be viewed as a random weighted sum of column
vectors $\vect h_i$:
\begin{equation*}
\vect \nu_k = \frac{1}{M}\sum_{i=1}^N Z_i\vect h_i
\end{equation*}
and $\|\vect \nu_k\|$ is a random variable. We have:
\begin{equation*}
E(\|\vect \nu_k\|^2) = \sum_{1\leq i,j\leq N}E(Z_i Z_j)\langle \vect
W_i, \vect h_j\rangle = \sum_{1\leq i\leq N} E(Z_i^2)\|\vect
W_i\|^2,
\end{equation*}
where the last equality holds due to $E(Z_i Z_j)=0$ if $i\neq j$.
Thus,
\begin{equation*}
\begin{split}
E(\|\vect \nu_k\|^2) & =\frac{M}{N}(1-\frac{M}{N})\sum_{1\leq i\leq
N}
U_{ik}^2\|\vect v_i\|^2\\
               & \leq \frac{M}{N}(1-\frac{M}{N})\mu_c^2 \sum_{1\leq i\leq N} U_{ik}^2 \leq
               \mu_c^2.
\end{split}
\end{equation*}
where the last inequality holds due to $\|\vect U_k\|^2 =
\frac{N}{M}$. This implies that $E(\|\vect \nu_k\|) \leq \mu_c$. To
show that $\|\vect \nu_k\|$ is concentrated around its mean, we use
the Talagrand's theorem of concentration inequality
\cite{TalagrandConcentration96}. First, we have:
\begin{equation*}
\begin{split}
\|\mat H\|_2^2 & = \sup_{\|\vect \beta\|=1}\sum_{i=1}^N |\langle \vect \beta, \vect h_i\rangle |^2 = \sup_{\|\vect \beta\|=1}\sum_{i=1}^N U_{ik}^2 |\langle \vect \beta, \vect v_i\rangle |^2\\
          & \leq \mu^2 \sup_{\|\vect \beta\|=1}\sum_{i=1}^N |\langle \vect \beta, \vect v_i\rangle|^2
          = \mu^2\|\mat U_T\|_2^2 = \frac{N}{M}\mu^2.
\end{split}
\end{equation*}
where the last equation holds because $\|\mat U_T\|_2^2 =
\frac{N}{M}$. Thus, we derive the upper bound of the variance
$\sigma^2$:
\begin{equation*}
\sigma^2 = E(Z_k^2)\|\mat H\|_2^2 \leq
\frac{M}{N}(1-\frac{M}{N})\frac{N}{M}\mu^2\leq \mu^2.
\end{equation*}

In addition, it is obvious that $|Z_k|\leq 1$ and thus
\begin{equation*}
B =\max_{1\leq i\leq N}\|\vect h_i\|_2 \leq \mu \mu_c.
\end{equation*}

The Talagrand's theorem \cite{TalagrandConcentration96} (see
Appendix 2) shows that:
\begin{equation*}
P(\|\vect \nu_k\|-E(\|\vect \nu_k\|)\geq t)\leq
3\exp(\frac{-t}{cB}\log (1+\frac{Bt}{\sigma^2+BE(\|\vect
\nu_k\|)})),
\end{equation*}
where $c$ is some positive constant. Replacing $E(\|\vect \nu_k\|)$,
$\sigma^2$ and $B$ by their upper bounds in the right-hand side, we
obtain:
\begin{equation*}
P(\|\vect \nu_k\|-E(\|\vect \nu_k\|)\geq t)\leq
3\exp(\frac{-t}{c\mu\mu_c}\log (1+\frac{\mu\mu_c
t}{\mu^2+\mu\mu_c^2})).
\end{equation*}

The next step is to simplify the right-hand side of the above
inequality by replacing the denominator inside the $\log$ by two
times the dominant term and note that $log(1+x)\geq \frac{x}{2}$
when $x\leq 1$. In particular, there are two cases:
\begin{itemize}
\item Case 1: $\mu\mu_c^2\geq \mu^2$ or
equivalently, $\mu_c^2\geq \mu$, denote $\overline{\sigma}^2 =
\mu\mu_c^2$ and $t = a\overline{\sigma}$ . If $\mu\mu_c t\leq
2\mu\mu_c^2$ or equivalently, $a \leq 2(1/\mu)^{\frac{1}{2}}$,
\begin{equation*}
P(\|\vect \nu_k\|-E(\|\vect \nu_k\|)\geq t)\leq 3\exp(-\gamma a^2).
\end{equation*}
\item Case 2: $\mu^2\geq \mu\mu_c^2$, denote $\overline{\sigma}^2=\mu^2$ and $t =
a\overline{\sigma}$. If $\mu\mu_c t\leq 2\mu^2$ or equivalently,
$a\leq 2/\mu_c$
\begin{equation*}
P(\|\vect \nu_k\|-E(\|\vect \nu_k\|)\geq t)\leq 3\exp(-\gamma a^2).
\end{equation*}
where $\gamma$ is some positive constant.
\end{itemize}
In conclusion, let $\overline{\sigma} =
\sqrt{\max(\mu\mu_c^2,\mu^2)}$. Then, for any $a \leq \min(2/\mu_c,
2/\sqrt{\mu})$:
\begin{equation}\label{newcond6}
P(\|\vect \nu_k\|\geq \mu_c + a\overline{\sigma})\leq 3\exp(-\gamma
a^2),
\end{equation}
where $\gamma$ is some positive constant.
\end{proof}

The second lemma is to bound the spectral norm of $\mat U^*_{\Omega
T}\mat U_{\Omega \mathcal{T}}$
\begin{lem}\label{lemma2} (Bound the spectral norm of $\mat U^*_{\Omega T}\mat U_{\Omega
\mathcal{T}}$)

With high probability, $\|\mat U^*_{\Omega \mathcal{T}}\mat
U_{\Omega \mathcal{T}}\| \geq \frac{1}{2}$
\end{lem}

\begin{proof}
The Theorem $1.2$ in \cite{CandesSpa07} shows that with probability
$1-\delta$, $\|\mat U^*_{\Omega \mathcal{T}}\mat U_{\Omega
\mathcal{T}}\| \geq \frac{1}{2}$ if $M\geq \mu_c^2\max(c_1\log K,
c_2\log (3/\delta))$, where $c_1$ and $c_2$ are some known positive
constants.

\end{proof}

And the third lemma is to bound the norm of $\vect w_k = \vect \nu_k
(\mat U^*_{\Omega \mathcal{T}}\mat U_{\Omega T})^{-1}$
\begin{lem}\label{lemma3}
(Bound the norm of $\vect w_k = \vect \nu_k (\mat U^*_{\Omega
\mathcal{T}}\mat U_{\Omega T})^{-1}$)

With high probability, $\|\vect w_k\|$ is on the order of
$\mathcal{O}(\mu_c)$:
\begin{equation}\label{newcond7}
P(\sup_{k\in \mathcal{T}^c}\|\vect w_k\|\geq
2\mu_c+2a\overline{\sigma})\leq 3N\exp(-\gamma a^2)+P(\|\mat
U_{\Omega \mathcal{T}}^*\mat U_{\Omega \mathcal{T}}\|\leq
\frac{1}{2})
\end{equation}
where $a$, $\gamma$ and $\overline{\sigma}$ are defined in the proof
of the Lemma~\ref{lemma1}.
\end{lem}

\begin{proof}

Let $\mathcal{A}$ be the event that $\{\|\mat U_{\Omega
\mathcal{T}}^*\mat U_{\Omega \mathcal{T}}\|\geq \frac{1}{2}\}$ or
equivalently, $\{\|(\mat U_{\Omega \mathcal{T}}^*\mat U_{\Omega
\mathcal{T}})^-1\|\leq 2\}$ and $\mathcal{B}$ be the event that
$\{\sup_{k\in \mathcal{T}^c}\|\vect \nu_k\|\leq
\mu_c+a\overline{\sigma}\}$. Note that
\begin{equation*}
\sup_{k\in \mathcal{T}^c}\|\vect w_k\|\leq \|(\mat U_{\Omega
\mathcal{T}}^*\mat U_{\Omega \mathcal{T}})^{-1}\|\sup_{k\in
\mathcal{T}^c}\|\vect \nu_k\|.
\end{equation*}
Thus,
\begin{equation*}
P(\sup_{k\in \mathcal{T}^c}\|\vect w_k\|\geq
2\mu_c+2a\overline{\sigma})\leq P(\overline{\mathcal{A}\cap
\mathcal{B}})\leq
P(\overline{\mathcal{A}})+P(\overline{\mathcal{B}}).
\end{equation*}
Note that $P(\overline{\mathcal{B}})\leq 3N\exp(-\gamma a^2)$
implies (\ref{newcond7}) holds.

\end{proof}

To establish the Exact Recovery Principle, we will show that
$\sup_{k\in \mathcal{T}^c}|\langle \vect w_k, \vect z\rangle|\leq 1$
with high probability. Note that because $\vect z$ is assumed to be
a vector of i.i.d Bernoulli random variables, $|\langle \vect w_k,
\vect z\rangle|$ is concentrated around its zero mean. In
particular, according to the Hoeffding's inequality:
\begin{equation*}
P(|\langle \vect w_k, \vect z\rangle|\geq 1)\leq
2\exp(-\frac{1}{2\|\vect w_k\|^2}).
\end{equation*}
\begin{equation*}
\Rightarrow P(|\langle \vect w_k, \vect z\rangle|\geq 1|\sup_{k\in
\mathcal{T}^c}\|\vect w_k\|\leq \lambda)\leq
2N\exp(-\frac{1}{2\lambda^2}).
\end{equation*}
Note that with two arbitrary probabilistic events $\mathcal{A}$ and
$\mathcal{B}$:
\begin{equation*}
P(\mathcal{A}) =
P(\mathcal{A}|\mathcal{B})P(\mathcal{B})+P(\mathcal{A}|\overline{\mathcal{B}})P(\overline{\mathcal{B}})\leq
P(\mathcal{A}|\mathcal{B})+P(\overline{\mathcal{B}}).
\end{equation*}
Now, let $\mathcal{A}$ be the event $\{\sup_{k\in
\mathcal{T}^c}|\langle \vect w_k, \vect z\rangle|\geq 1\}$ and
$\mathcal{B}$ be the event $\{\sup_{k\in \mathcal{T}^c}\|\vect
w_k\|\leq \lambda \}$, we derive
\begin{equation}\label{newcond8}
P(\sup_{k\in \mathcal{T}^c}|\langle \vect w_k, \vect z\rangle|\geq
1)\leq 2N\exp(-\frac{1}{2\lambda^2})+P(\sup_{k\in
\mathcal{T}^c}\|\vect w_k\|\geq \lambda).
\end{equation}
Choose $\lambda = 2\mu_c+2a\overline{\sigma}$, according to
(\ref{newcond7}) and (\ref{newcond8}), the probability of our
interest $P(\sup_{k\in \mathcal{T}^c}|\langle \vect w_k, \vect
z\rangle|\geq 1)$ is upper bounded by:
\begin{equation*}
3N\exp(-\gamma a^2)+2N\exp(-\frac{1}{2\lambda^2})+\delta.
\end{equation*}

To show that $\{\sup_{k\in \mathcal{T}^c}|\langle \vect w_k, \vect
z\rangle|\leq 1\}$ with probability $1-\mathcal{O}(\delta)$, it is
sufficient to show that the above upper bound is not greater than
$3\delta$. In particular, choose $a^2 = \gamma^{-1}\log (3N/\delta)$
that makes the first term to be equal $\delta$.

To make the second term less than $\delta$, it is required that
\begin{equation}\label{newcond9}
\frac{1}{2\lambda^2}\geq \log (\frac{2N}{\delta}).
\end{equation}

\begin{itemize}
\item Case 1: $\mu_c^2\geq \mu$. The condition that (\ref{newcond6}) holds is $a \leq
2(1/\mu)^{\frac{1}{2}}$ that is equivalent to:
\begin{equation*}
1\geq \frac{1}{4}\gamma^{-2}\mu^2\log^2 (3N/\delta).
\end{equation*}
It is easy to see $\mu_c\geq a \overline{\sigma}$, where
$\overline{\sigma} = (\mu\mu_c^2)^{1/2}$. In this case, $\lambda
\leq 4\mu_c$. Thus, (\ref{newcond9}) holds if
\begin{equation}\label{newcond10}
1\geq 32\mu_c^2\log (\frac{2N}{\delta}).
\end{equation}
\item Case 2: $\mu\geq \mu_c^2$. The condition that
(\ref{newcond6}) holds is $a\leq 2/\mu_c$ or equivalently,
\begin{equation*}
1\geq \frac{1}{4}\gamma^{-2}\mu_c^2\log (3N/\delta).
\end{equation*}
If $\mu_c\geq a\overline{\sigma}$, where $\overline{\sigma} = \mu$,
$\lambda \leq 4\mu_c$ and the condition is again (\ref{newcond10}).
Otherwise, $\lambda\leq 4a\overline{\sigma}$. In this case,
(\ref{newcond9}) holds if
\begin{equation*}
1\geq 32\gamma^{-1}\mu^2\log (\frac{2N}{\delta}).
\end{equation*}
\end{itemize}

In conclusion, the Exact Recovery Principle is verified if $1\geq
\max(c_1\mu^2\log^2(3N/\delta), c_2\mu_c^2\log (3N/\delta))$, where
$c_1$ and $c_2$ are known positive constants.

Finally, note that $\mu^2 \leq \mathcal{O}(\frac{N\log N}{BM})$ and
$\mu_c^2 \leq \mathcal{O}(\frac{NK}{BM})$ and the assumption that
$K\geq 16\log (\frac{2N}{\delta})$, the sufficient condition for
exact recovery is $M \geq \mathcal{O}(\frac{N}{B}K\log
(\frac{N}{\delta}))$. When $\mat F$ is dense and uniform, the
condition becomes $M \geq \mathcal{O}(K\log (\frac{N}{\delta}))$.

\end{proof}

\section{Numerical Experiments}\label{numerical_experiment}
\subsection{Simulation with Sparse Signals}
In this section, we evaluate the sensing performance of several
structurally random matrices and compare it with that of the
completely random projection. We also explore the connection among
sensing performance (probability of exact recovery), streaming
capacity (block size of $\mat F$) and structure of the sparsifying
basis $\mat \Psi$ (e.g. sparsity and heterogeneity).

In the first simulation, the input signal $\vect x$ of length
$N=256$ is sparse in the DCT domain, i.e. $\vect x =\mat \Psi\vect
\alpha$, where the sparsifying basis $\mat \Psi$ is the $256\times
256$ IDCT matrix. Its transform coefficient vector $\vect \alpha$
has $K$ nonzero entries whose magnitudes are Gaussian distributed
and locations are at uniformly random, where
$K\in\{10,20,30,40,50,60\}$. With the signal $\vect x$, we generate
a measurement vector of length $M=128$: $\vect y = \mat \Phi\vect
x$, where $\mat \Phi$ is some structurally random matrix or a
completely Gaussian random matrix. SRMs under consideration are
summarized in Table ~\ref{table:srm_table1}.

\begin{table*}
\centering%
\caption{SRMs employed in the experiment with sparse signals}
\label{table:srm_table1}
\begin{tabular}{| c | c | c | c |}
\hline  Notation & \textbf{R} & \textbf{F}\\
\hline WHT64-L   & Local randomizer  & $64\times 64$ block diagonal WHT \\
\hline WHT64-G   & Global randomizer & $64\times 64$ block diagonal WHT \\
\hline WHT256-L  & Local randomizer  & $256\times 256$ block diagonal WHT \\
\hline WHT256-G  & Global randomizer & $256\times 256$ block diagonal WHT \\
\hline
\end{tabular}
\end{table*}

The software $l_1$-magic \cite{CandesRob06} is employed to recover
the signal from its measurements $\vect y$. For each value of
sparsity $K\in\{10,20,30,40,50,60\}$, we repeat the experiment 500
times and count the probability of exact recovery. The performance
curve is plotted in Fig.~\ref{syncurves}(a). Numerical values on the
$x$-axis denote signal sparsity $K$ while those on the $y$-axis
denote the probability of exact recovery. We then repeat similar
experiments when an input signal is sparse in some sparse and
non-uniform basis $\mat \Psi$. Fig.~\ref{syncurves}(b) and Fig.
~\ref{syncurves}(c) illustrate the performance curves when $\mat
\Psi$ is the Daubechies-8 wavelet basis and the identity matrix,
respectively.

There are a few notable observations from these experimental
results. First, performance of the SRM with the dense transform
matrix $\mat F$ (all of its entries are non-zero) is in average
comparable to that of the completely random matrix. Second,
performance of the SRM with the sparse transform matrix $\mat F$,
however, depends on the sparsifying basis $\mat \Psi$ of the signal.
In particular, if $\mat \Psi$ is dense, the SRM with sparse $\mat F$
also has average performance comparable with the completely random
matrix. If $\mat \Psi$ is sparse, the SRM with sparse $\mat F$ often
has worse performance the SRM with dense $\mat F$, revealing a
trade-off between sensing performance and streaming capacity. These
numerical results are consistent with the theoretical analysis
above. In addition, Fig. ~\ref{syncurves}(b) shows that the SRM with
the global randomizer seems to work much better than the SRM with
the local randomizer when the sparsifying basis $\mat \Psi$ of the
signal is sparse.

\begin{figure}[t]
\centering%
\includegraphics[width=8.5cm]{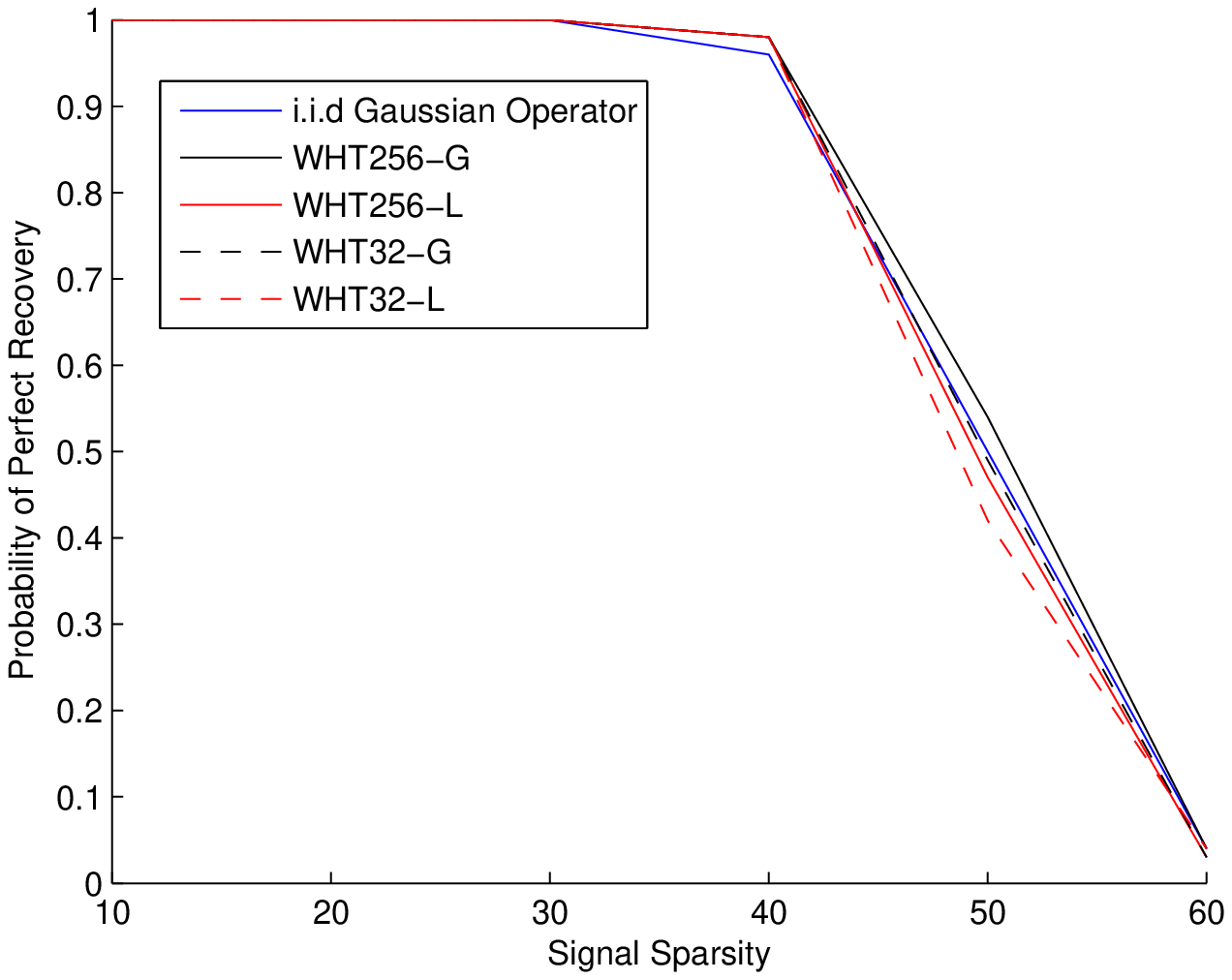}
\centerline{(a)}
\includegraphics[width=8.5cm]{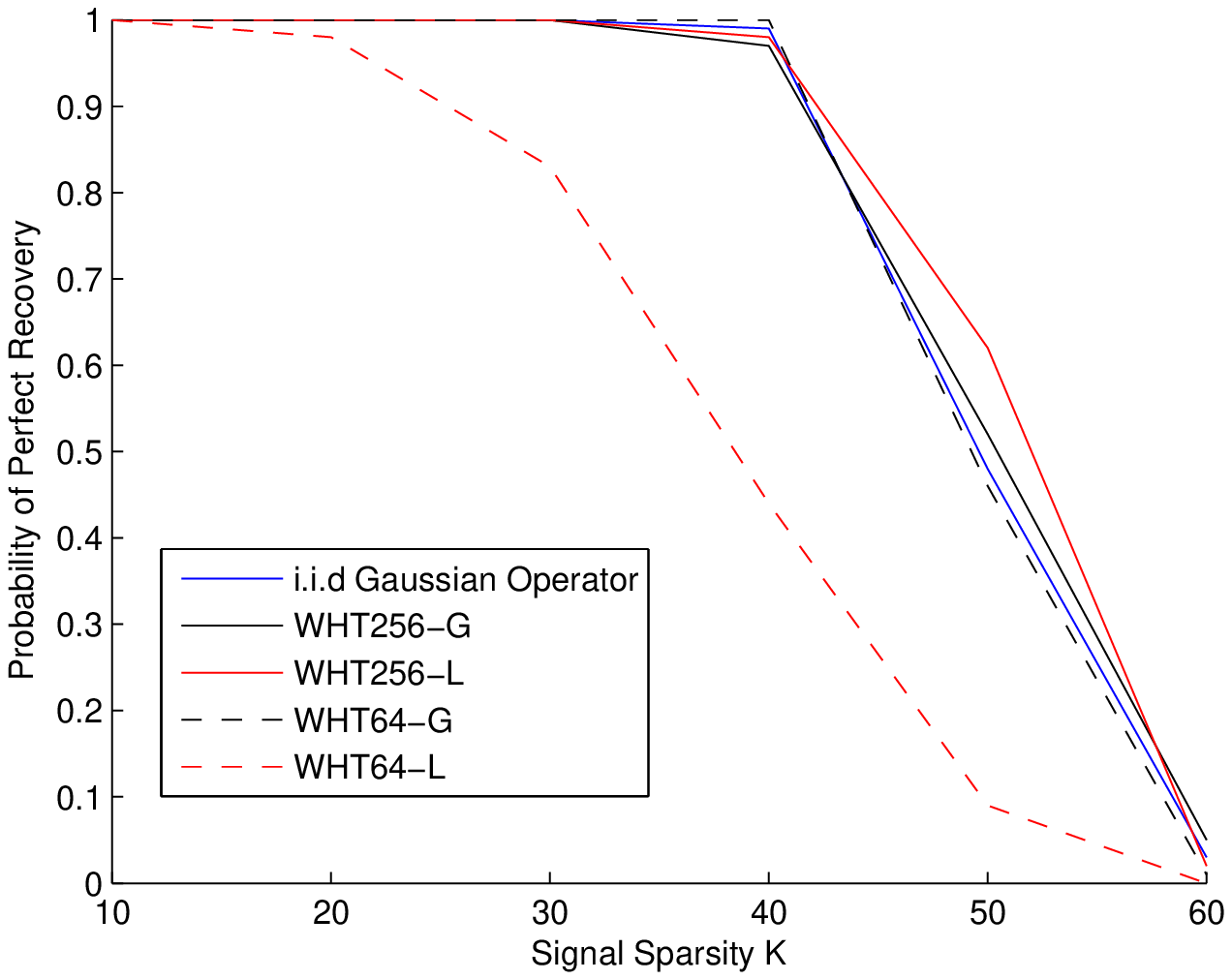}
\centerline{(b)}
\includegraphics[width=8.5cm]{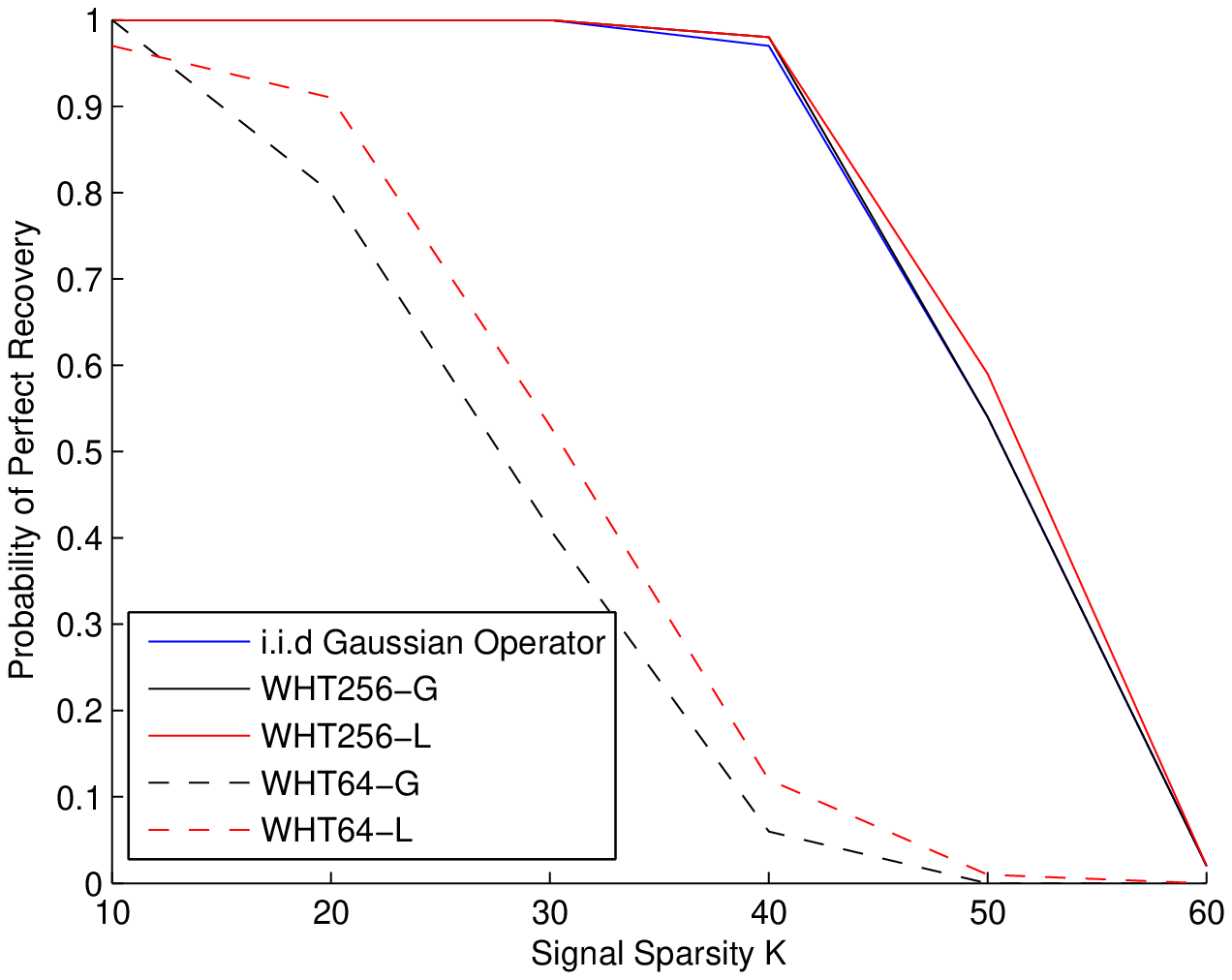}
\centerline{(c)} \caption{Performance curves: probability of exact
recovery vs. Sparsity $K$. (a) when $\mat \Psi$ is IDCT basis. (b)
when $\mat \Psi$ is Daubechies-8 wavlet basis. (c) when $\mat \Psi$
is the identity basis}\label{syncurves}
\end{figure}

\subsection{Simulation with Compressible Signals}
In this simulation, signals of interest are natural images of size
$512\times 512$ such as the $512\times 512$ Lena, Barbara and Boat
images. The sparsifying basis $\mat \Psi$ used for these natural
images is the well-known Daubechies $9/7$ wavelet transform. All
images are implicitly regarded as 1-D signals of length $512^2$. The
GPSR software in \cite{MarioGrad07} is used for signal
reconstruction.

For such a large scale simulation, it takes a huge amount of system
resources to implement the sensing method of a completely random
matrix. Thus, for the purpose of benchmark, we adopt a more
practical scheme of partial FFT in the wavelet domain (WPFFT). The
WPFFT is to sense wavelet coefficients in the wavelet domain using
the method of partial FFT. Theoretically, WPFFT has optimal
performance as the Fourier matrix is completely incoherent with the
identity matrix. The WPFFT is a method of sensing a signal in the
transform domain that also requires substantial amount of system
resources. SRMs under consideration are summarized in Table
~\ref{table:srm_table2}.

\begin{table*}
\centering%
\caption{SRMs employed in the experiment with compressible signals}
\label{table:srm_table2}
\begin{tabular}{| c | c | c | c |}
\hline  Notation & \textbf{R} & \textbf{F}\\
\hline DCT32-G   & Global randomizer  & $32\times 32$ block diagonal DCT \\
\hline WHT32-G   & Global randomizer & $32\times 32$ block diagonal WHT \\
\hline DCT512-L  & Local randomizer  & $512\times 512$ block diagonal DCT \\
\hline WHT512-L  & Local randomizer & $512\times 512$ block diagonal WHT \\
\hline
\end{tabular}
\end{table*}

For the purpose of comparison, we also implement two popular sensing
methods: partial FFT in the time domain (PFFT)\cite{CandesRob06} and
the Scrambled/Permutted FFT (SFFT) in \cite{CandesSta06,
DuarteFas05} that is equivalent to the dense SRM using the global
randomizer.

The performance curves of these sensing ensembles are plotted in
Fig.~\ref{imagecurves}(a), Fig.~\ref{imagecurves}(b) and
Fig.~\ref{imagecurves}(c), which correspond to the input signal
Lena, Barbara and Boat images, respectively. Numerical value on the
$x$-axis represents sampling rate, which is the number of
measurements over the total number of samples. Value on $y$-axis is
the quality of reconstruction (PSNR in dB). Lastly,
Fig.~\ref{visualrecons} shows the visually reconstructed $512\times
512$ Boat image from $35\%$ of measurements using WPFFT, WHT32-G and
WHT512-L ensembles.

\begin{figure}[htb]
\centering%
\includegraphics[width=8.5cm]{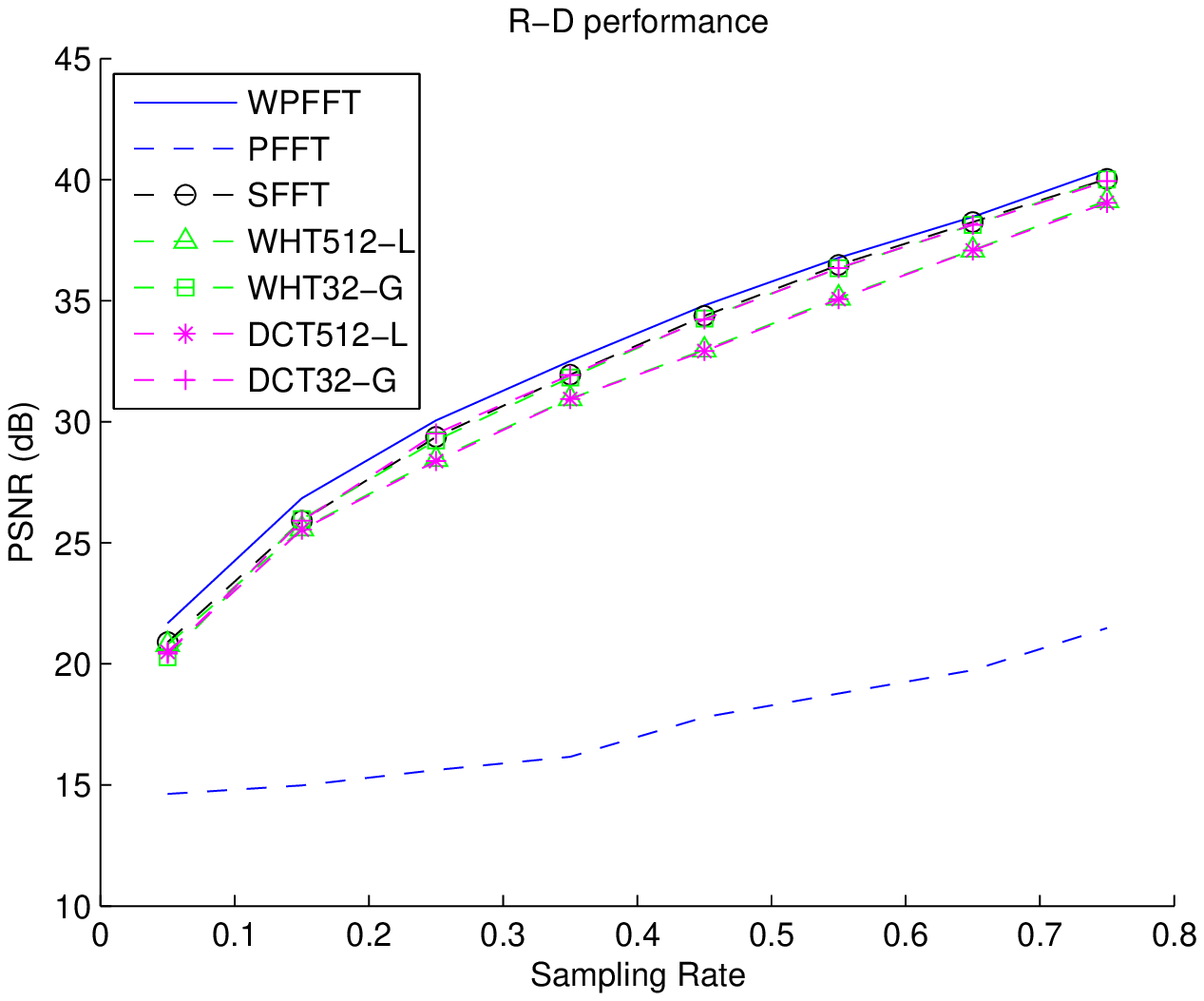}
\centerline{(a)}
\includegraphics[width=8.5cm]{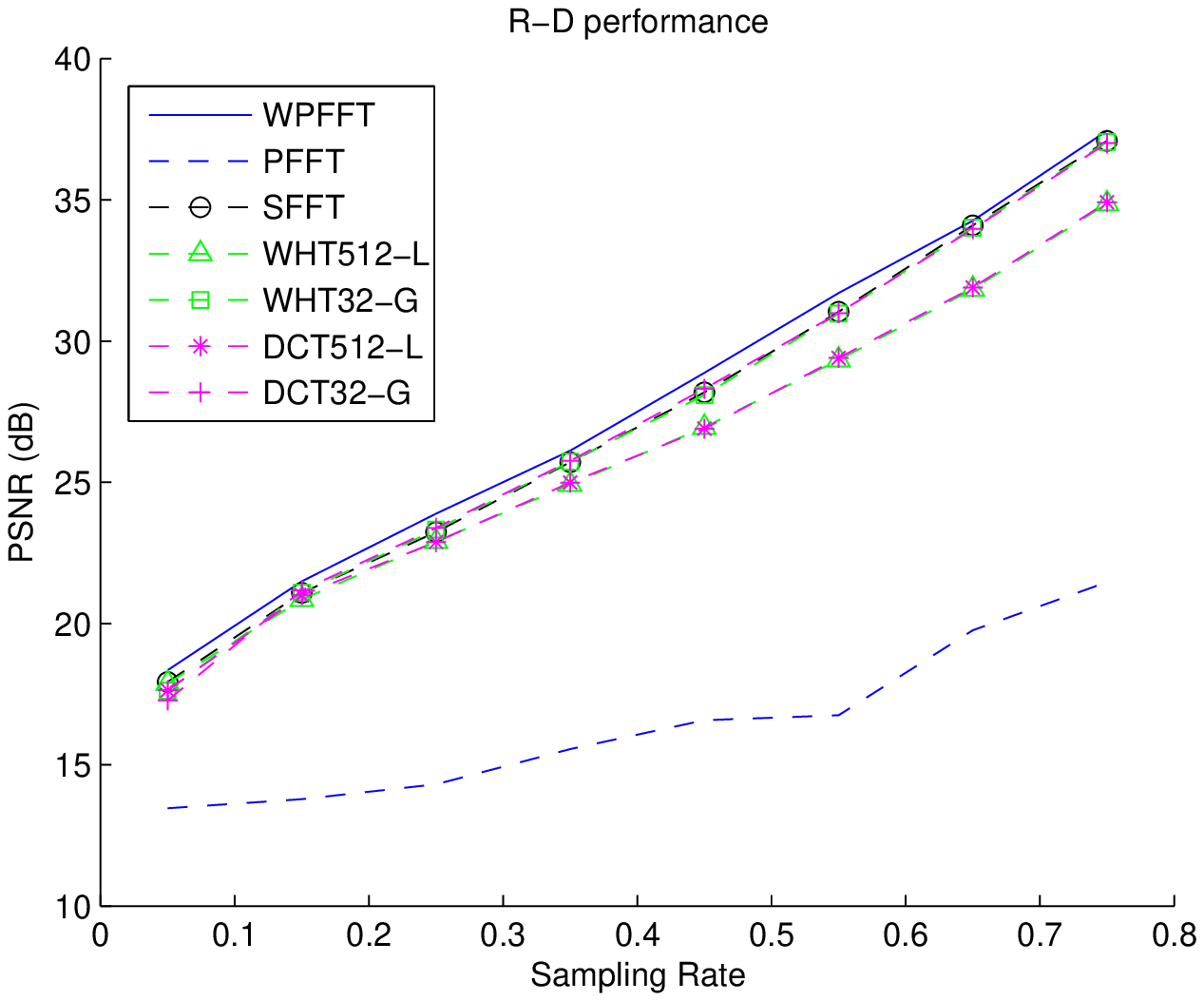}
\centerline{(b)}
\includegraphics[width=8.5cm]{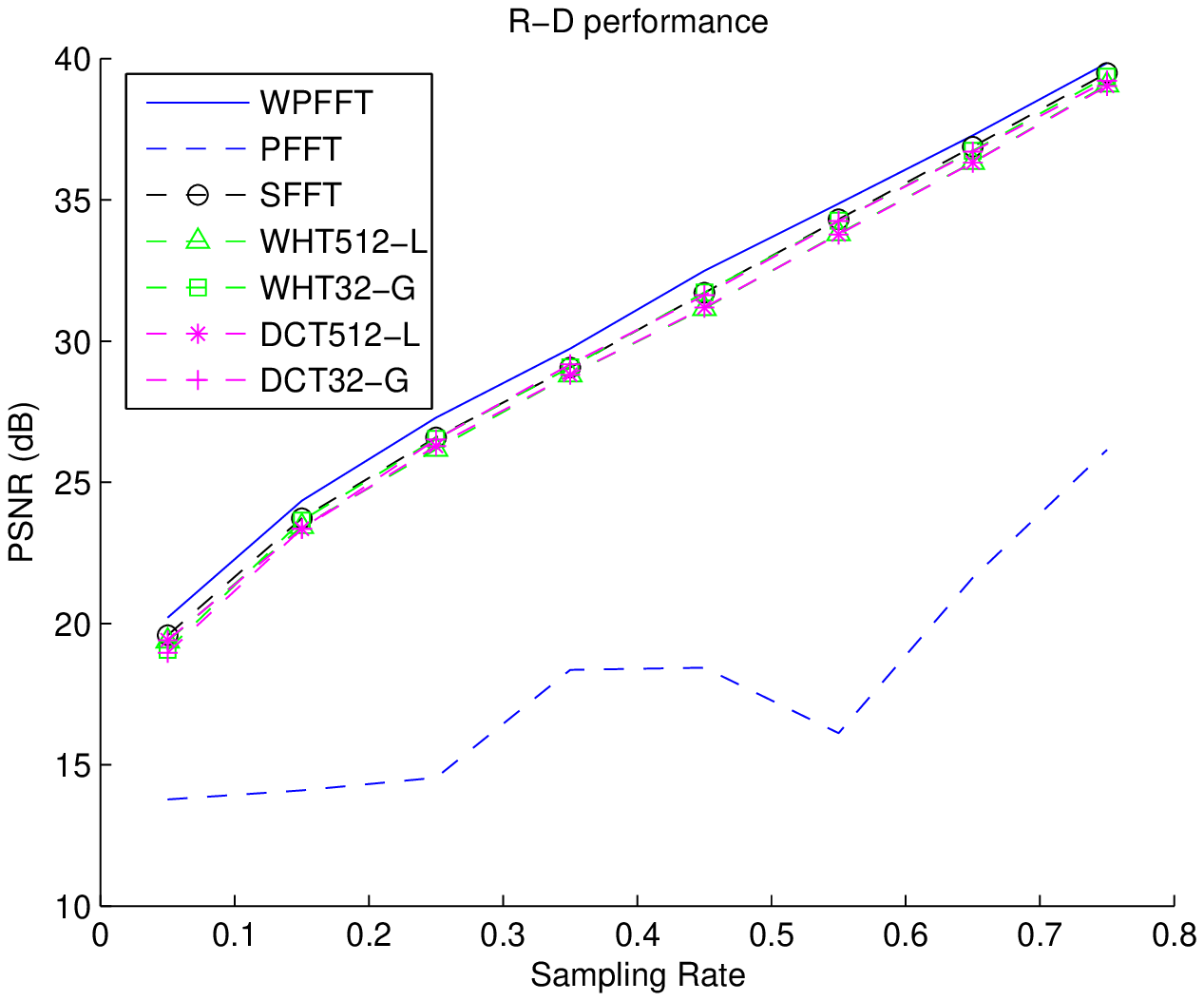}
\centerline{(c)} \caption{Performance curves: Quality of signal
reconstruction vs. sampling rate $M/N$. (a) the $512\times 512$ Lena
image. (b) the $512\times 512$ Barbara image. (c) the $512\times
512$ Boat image}\label{imagecurves}
\end{figure}

As clearly seen in Fig. 3, the PFFT is not an efficient sensing
matrix for smooth signals like images because Fourier matrix and
wavelet basis are highly coherent. On the other hand, the SRM
method, which can roughly be viewed as the PFFT preceded by the
pre-randomization process, is very efficient. In particular, with a
dense SRM like SFFT, the performance difference between the SRM
method and the benchmark one, WPFFT, is less than $1$ dB. In
addition, performance of DCT512-L and WHT512-L that are fully
streaming capable SRM, degrades about $1.5$ dB, which is a
reasonable sacrifice as the buffer size required is less than $0.2$
percent of the total length of the original signal. Less degradation
is obtainable when the buffer size is increased. Also, in all cases,
there is no observable difference of performance between DCT and
normalized WHT transforms. It implies that orthonormal matrices
whose entries have the same order of absolute magnitude generate
comparable performance.  In addition, highly sparse SRM using the
global randomizer such as DCT32-G and WHT32-G has experimental
performance comparable to that of the dense SRMs. Note that these
SRM are highly sparse because their density are only $2^{-13}$. This
observation again verifies that SRM with the global randomizer
outperforms SRM with the local randomizer. This might indicate that
our theoretical analysis for the global randomizer is inadequate. In
practice, we believe that the global randomizer always works as well
as and even better than the local randomizer. We leave the
theoretical justification of this observation for our future
research.

\begin{figure*}[hp]
\vspace{-1cm}
  \begin{center}
   \vspace{-0.2cm}
   \subfigure[]{\label{fig:edge-a}\includegraphics[width=6.0cm]{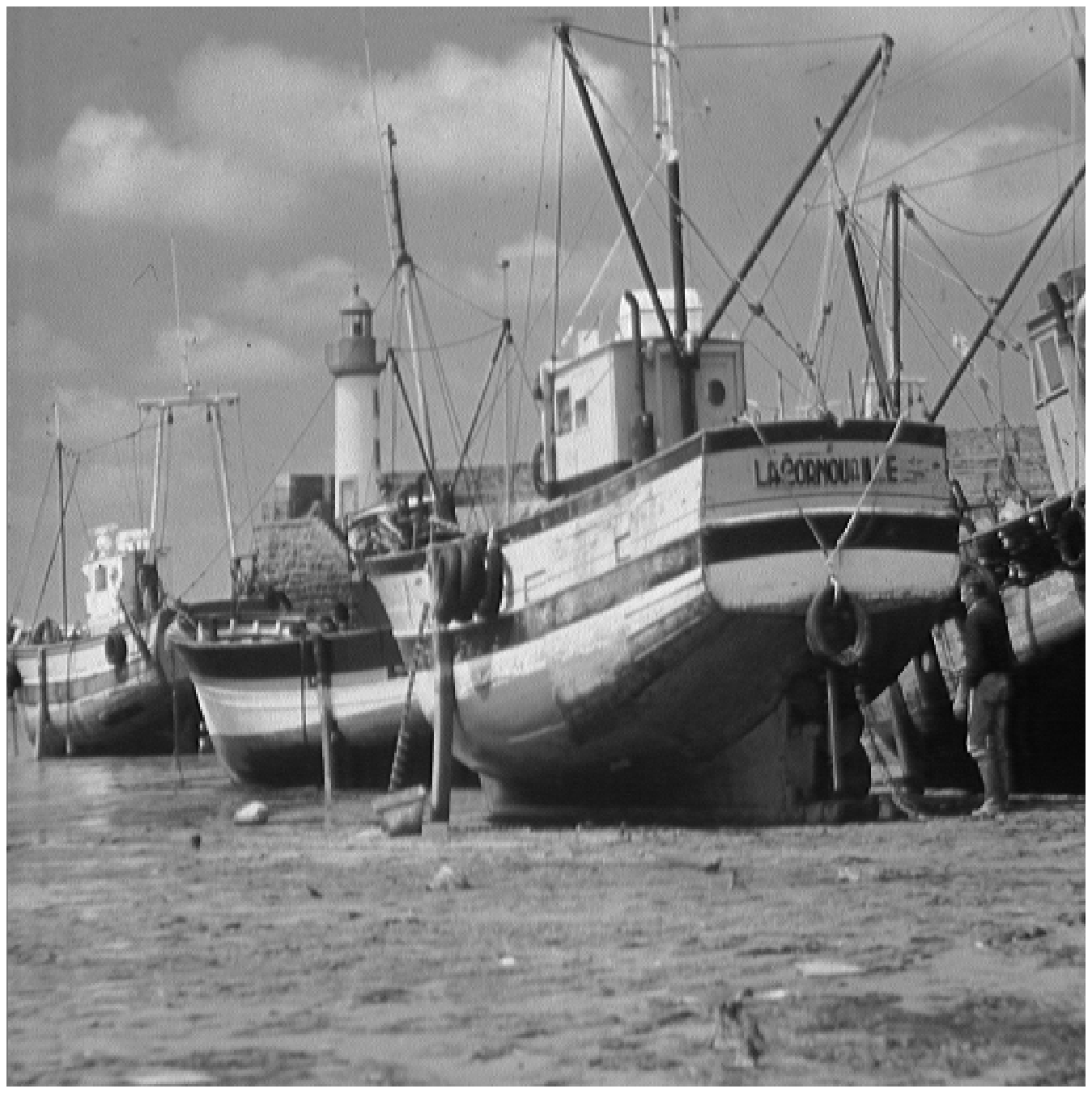}}
   \hspace{0.5cm}
    \subfigure[]{\label{fig:edge-b}\includegraphics[width=6.0cm]{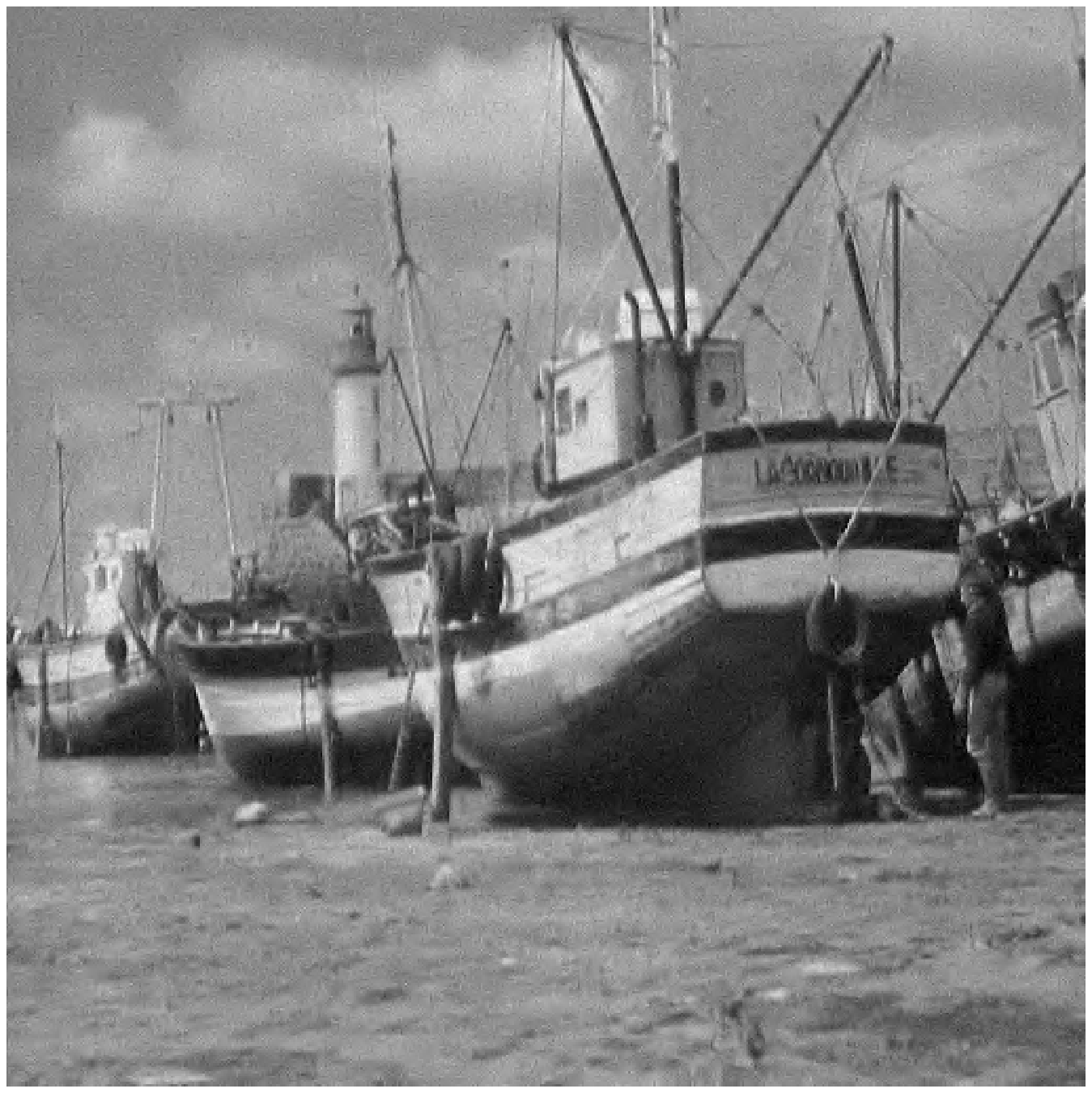}}
    \end{center}
      \begin{center}
      \vspace{-0.2cm}
    \subfigure[]{\label{fig:edge-c}\includegraphics[width=6.0cm]{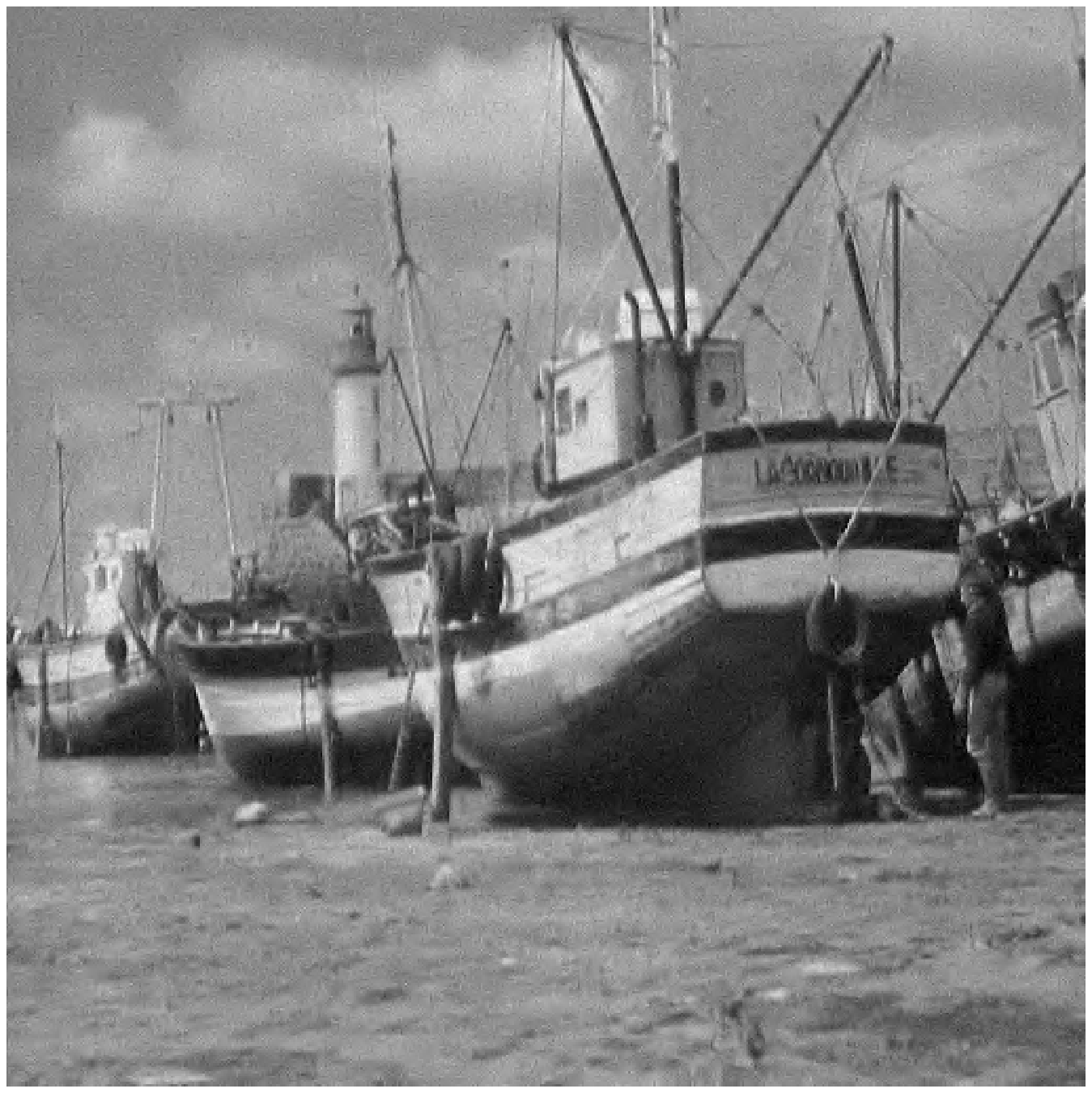}}
    \hspace{0.5cm}
     \subfigure[]{\label{fig:edge-d}\includegraphics[width=6.0cm]{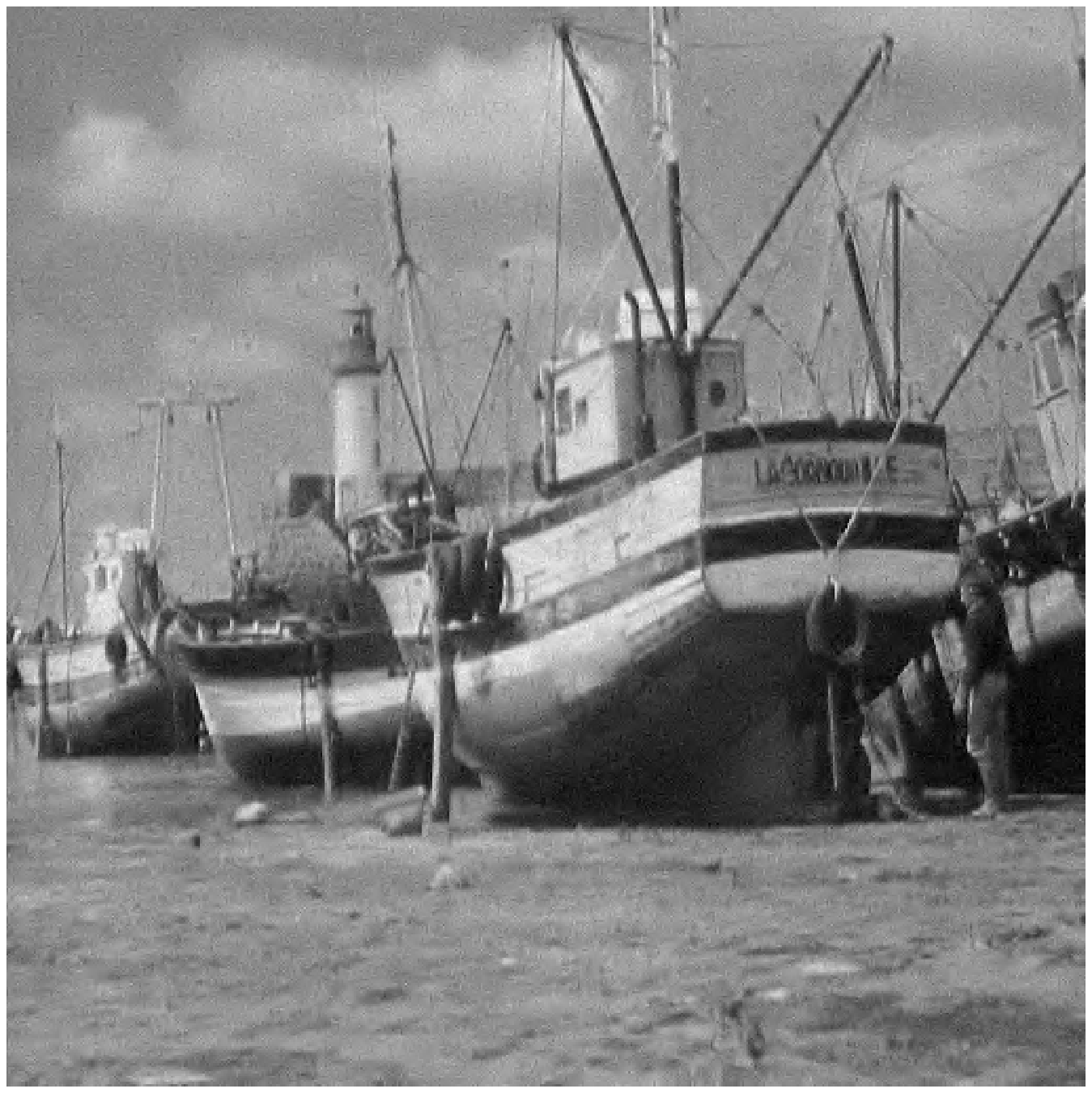}}
    \end{center}
  \caption{Reconstructed $512\times512$ \emph{Boat} images from $M/N=35\%$
   sampling rate. (a) The original Boat image; (b) using the WPFFT ensemble: 28.5dB; (c) using the WHT32-G ensemble: 28dB; (d) using the WHT512-L ensemble: 27.7dB}
  \label{visualrecons}
\end{figure*}

\section{Discussion and Conclusion}\label{discussion_conclusion}

\subsection{Complexity Discussion}
We compare the computation and memory complexity between the
proposed SRM and other random sensing matrices such as Gaussian or
Bernoulli i.i.d. matrices. In implementation, the i.i.d Bernoulli
matrix is obviously preferred than i.i.d Gaussian one as the former
has integer entries $\{1,-1\}$ and requires only 1 bit to represent
each entry. A $M\times N$ i.i.d. Bernoulli sensing matrix requires
$MN$ bits for storing the matrix and $MN$ additions and
multiplications for sensing operation. An $M\times N$ SRM only
requires $2N + N\log N$ bits for storage and $N + N\log N$ additions
and multiplications for sensing operation. With the SRM method, the
computational complexity and memory space required is independent
with the number of measurements $M$. Note that with the SRM method,
we do not need to store matrices $\mat D$, $\mat F$, $\mat R$
explicitly. We only need to store the diagonals of $\mat D$ and of
$\mat R$ and the fast transform $\mat F$, resulting in significant
saving of both memory space and computational complexity.

Computational complexity and running time of $l_1$-minimization
based reconstruction algorithms often depend critically on whether
matrix-vector multiplications $\mat A\vect u$ and $\mat A^T\vect u$
can be computed quickly and efficiently (where $\mat A = \mat
\Phi\mat \Psi$) \cite{MarioGrad07}. For the sake of simplicity,
assuming that $\mat \Psi$ is identity matrix. $\mat A\vect u = \mat
\Phi\vect u$ requires $MN = \mathcal{O}(KN\log N)$ additions and
multiplications for a random sensing matrix $\mat \Phi$ and
$\mathcal{O}(N\log N)$ additions and multiplications for the SRM
method. This implies that at each iteration, SRM can speed up the
reconstruction algorithm with at least $K$ folds. With compressible
signals (e.g., images), the number of measurements acquired tends to
be proportional with the signal dimension, for example, $M = N/4$.
In this case, using SRM can achieve computational complexity
reduction with the factor of $\frac{N}{4\log N}$ times.

Table \ref{table:feature_comparison} summarizes computational
complexity and practical advantages between SRM and a random sensing
matrix.

\begin{table*}
\centering%
\caption{Practical feature comparison}
\label{table:feature_comparison}
\begin{tabular}{| c | c | c | c |}
\hline Features                                   & SRMs                             & Completely Random Matrices\\
\hline No. of measurements for exact recovery     & $\mathcal{O}(K\log N)$           & $\mathcal{O}(K\log N)$ \\
\hline Sensing complexity                         & $N\log N$                        & $\mathcal{O}(KN\log N)$ \\
\hline Reconstruction complexity at each iteration& $\mathcal{O}(N\log N)$           & $\mathcal{O}(KN\log N)$ \\
\hline Fast computability                         & Yes                              & No \\
\hline Block-based processing                     & Yes                              & No \\
\hline
\end{tabular}
\end{table*}

\subsection{Relationship with Other Related Works}

When $\mat R$ is the local randomizer, SRM is a little reminiscent
to the so-called Fast Johnson-Lindenstrauss Transform (FJLT)
\cite{NirFJLT06}. However, SRM employs a simpler matrix $\mat D$. In
FJLT, this matrix $\mat D$ is a completely random matrix with sparse
distribution. It is unknown if there exists an efficient
implementation of such a sparse random matrix. SRM is relevant for
practical applications because of its high performance and fast
computation.

In \cite{CandesSta06, DuarteFas05}, the Scrambled/Permuted FFT is
experimentally proposed as a heuristic low-complexity sensing method
that is efficient for sensing a large signal. To the best of our
knowledge, however, there has not been any theoretical analysis for
the Scrambled FFT. SRM is a generalized framework in which Scrambled
FFT is just a specific case, and thus verifying the theoretical
validity of the Scrambled FFT.

Random Convolution convolving the input signal with a random pulse
followed by randomly subsampling measurements is proposed in
\cite{JustinRandomconvol08} as a promising sensing method for
practical applications. Although there are a few other methods that
exploit the same idea of convolving a signal with a random pulse,
for examples: Random Filter in \cite{TroppRan06} and Toeplitz
structured sensing matrix in \cite{WaheedToeplitzCS07}, only the
Random Convolution method can be shown to approach optimal sensing
performance. While sensing methods such as Random Filter and
Toeplitz-based CS methods subsample measurements structurally, the
Random Convolution method subsamples measurements in a random
fashion, a technique that is also employed in SRM. In addition, the
Random Convolution method introduces randomness into the Fourier
domain by randomizing phases of Fourier coefficients. These two
techniques decouple stochastic dependence among measurements and
thus, giving the Random Convolution method a higher performance.

SRM is distinct from all aforementioned methods, including the
Random Convolution one. A key difference is that SRM pre-randomizes
a sensing signal directly in its original domain (via the global
randomizer or the local randomizer) while the Random Convolution
method pre-randomizes a sensing signal in the Fourier domain.  SRM
also extends the Random Convolution method by showing that not only
Fourier transform but also other popular fast transforms, such as
DCT or WHT, can be employed to achieve similar high performance. In
conclusion, among existing sensing methods, the SRM framework
presents an alternative approach to design high performance,
low-complexity sensing matrices with practical and flexible
features.


%
%
\appendices
\section{}\label{appdx1}
\newtheoremstyle{nonum}{}{}{\itshape}{}{\bfseries}{.}{ }{\thmname{#1}\thmnote{ (\mdseries #3)}}
\theoremstyle{nonum}
\newtheorem{CLT}{Central Limit Theorem}
\begin{CLT} Let $Z_1,Z_2,\dots,Z_N$ be mutually independent random
variables. Assume $E(Z_k)=0$ and denote $\sigma^2 = \sum_{k=1}^N
\text{Var}(Z_k)$ . If for a given $\epsilon\geq 0$ and $N$
sufficiently large, the following inequalities hold:
\begin{equation*}
\text{Var}(Z_k)< \epsilon \sigma^2 \hspace{0.4cm} k=1,2,\dots,N
\end{equation*}
then distribution of the normalized sum $S = \sum_{k=1}^N Z_k$
converges to $\mathcal{N}(0,\sigma^2)$
\end{CLT}

\theoremstyle{nonum}
\newtheorem{ComCLT}{Combinatorial Central Limit Theorem}
\begin{ComCLT}Given two sequences
$\{a_k\}_{k=1}^N$ and $\{b_k\}_{k=1}^N$. Assume the $a_k$ are not
all equal and $b_k$ are also not all equal. Let $[\omega_1,
\omega_2,\dots,\omega_N]$ be a uniform random permutation of
$[1,2,...,N]$. Denote $Z_k = a_{\omega_k}$ and
\begin{equation*}
S = \sum_{k=1}^N Z_k b_k ;
\end{equation*}
$S$ is asymptotically normally distributed
$\mathcal{N}(E(S),\text{Var}(S))$ if
\begin{equation*}
\lim_{N\rightarrow \infty} N\frac{\max_{1\leq k\leq
N}(Z_k-\overline{Z})^2}{\sum_{k=1}^N
(Z_{k}-\overline{Z})^2}\frac{\max_{1\leq k\leq N}
(b_k-\overline{b})^2}{\sum_{k=1}^N (b_k-\overline{b})^2}  = 0;
\end{equation*}
where
\begin{equation*}
\overline{b} = \frac{1}{N}\sum_{k=1}^N
b_k\hspace{0.4cm}\text{and}\hspace{0.4cm}\overline{Z} =
\frac{1}{N}\sum_{k=1}^N Z_k.
\end{equation*}
\end{ComCLT}

\section{}\label{appdx2}
\theoremstyle{nonum}
\newtheorem{hoeff}{Hoeffding's Concentration Inequality}
\begin{hoeff}
Suppose $X_1, X_2,...,X_N$ are independent random variables and
$a_k\leq X_K\leq b_k$ ($k=1,2,...,N$). Define a new random variable
$S = \sum_{k=1}^N X_k$. Then for any $t>0$
\begin{equation*}
P(|S-E(S)|\geq t)\leq 2e^{-\frac{2t^2}{\sum_{k=1}^N (b_k-a_k)^2}}.
\end{equation*}
\end{hoeff}

\theoremstyle{nonum}
\newtheorem{ledoux}{Ledoux's Concentration Inequality}
\begin{ledoux}
Let $\{ \eta_i \}_{1 \leq i \leq N}$ be a sequence of independent
random variables such that $|\eta_i| \leq 1$ almost surely and
$\vect v_1$, $\vect v_2$,\dots, $\vect v_N$ be vectors in Banach
space. Define a new random variable: $S = \|\sum_{i=1}^N \eta_i
\vect v_i\|$. Then for any $t > 0$,
\begin{equation*}
P(S\geq E(S)+t)\leq 2\exp(-\frac{t^2}{16\sigma^2})
\end{equation*}
where $\sigma^2$ denote the variance of $S$ and $\sigma^2 =
\sup_{\|\vect u\|\leq 1}\sum_{i=1}^N |\langle \vect u,\vect
v_i\rangle|^2$.
\end{ledoux}

\theoremstyle{nonum}
\newtheorem{Talagrand}{Talagrand's Concentration Inequality}
\begin{Talagrand}
Let $Z_k$ be zero-mean i.i.d random variables and bounded $|Z_k|\leq
\lambda$ and $\vect u_k$ be column vectors of a matrix $\mat U$.
Define a new random variable: $S = \|\sum_{i=1}^N Z_k \vect u_k\|$.
Then for any $t>0$:
\begin{equation*}
P(S\geq E(S)+t)\leq 3\exp(-\frac{t}{cB}\log (1+\frac{Bt}{\sigma^2 +
BE(S)}))
\end{equation*}
where $c$ is some constant, variance $\sigma^2 = E(Z_k^2)\|\mat
U\|^2$ and $B = \lambda \max_{1\leq k\leq N}\|\vect u_k\|$.
\end{Talagrand}

\theoremstyle{nonum}
\newtheorem{Sourav}{Sourav's Concentration Inequality}
\begin{Sourav}
Let $\{Z_{ij}\}_{1\leq i,j\leq N}$ be a collection of numbers from
$[0,1]$. Let $[\omega_1,\omega_2,\dots,\omega_N]$ be a uniformly
random permutation of $[1,2,\dots,N]$. Define a new random variable:
$S=\sum_{i=1}^N Z_{i\omega_i}$. Then for any $t\geq 0$
\begin{equation*}
P(|S-E(S)|\geq t)\leq 2\exp(-\frac{t^2}{4E(S)+2t}).
\end{equation*}
\end{Sourav}




%
%
%
\bibliographystyle{IEEEbib}
\bibliography{research_proposal_references}

\end{document}